\documentclass[a4paper,12pt]{article}
\usepackage{graphicx}
\usepackage{amsmath,amsthm,amsfonts,amssymb,mathrsfs,mathtools,bm,soul}
\usepackage[OT1]{fontenc} 
\usepackage{etoc}
\usepackage{float}
\usepackage{setspace}
\usepackage{fancyhdr} 

\def\subsectiontitle{}
\def\subsubsectiontitle{}
\fancyhead[L]{\nouppercase{\rightmark\subsectiontitle}}
\fancyhead[R]{\nouppercase{\subsubsectiontitle}}

\usepackage{tikz}
\usepackage[colorinlistoftodos,textsize=scriptsize]{todonotes}

\usepackage{inconsolata}

\usepackage{enumerate}
\usepackage[margin=1in,marginparwidth=1in]{geometry}
\usepackage[colorlinks,breaklinks=true]{hyperref}
\usepackage{breakcites}

\usepackage{xcolor}
\AtBeginDocument{%
	\hypersetup{%
    linkcolor={red!50!black},%
    citecolor={blue!50!black},%
    urlcolor={blue!80!black}%
}%
}%

\usepackage[nameinlink,noabbrev,sort,capitalise]{cleveref}
\makeatletter
\def\ps@pprintTitle{%
 \let\@oddhead\@empty
 \let\@evenhead\@empty
 \def\@oddfoot{\emph{Very preliminary version}\hfill\emph{This draft: \today}}%
 \let\@evenfoot\@oddfoot}
\makeatother
\usepackage{etoolbox}
\patchcmd{\pprintMaketitle}
  {\fi\hrule}
  {\fi\ifvoid\extrainfobox\else\unvbox\extrainfobox\par\vskip10pt\fi\hrule}
  {}{}


\newsavebox\extrainfobox

\newtheorem{prop}{Proposition}
\crefname{prop}{Proposition}{Propositions}

\newtheorem{thm}{Theorem}
\crefname{thm}{Theorem}{Theorems}

\crefname{cor}{Corollary}{Corollaries}

\newtheorem{lem}{Lemma}
\crefname{lem}{Lemma}{Lemmas}

\crefname{ass}{Assumption}{Assumptions}

\crefname{axiom}{Axiom}{Axioms}

\newtheorem{defi}{Definition}
\crefname{defi}{Definition}{Definitions}

\theoremstyle{remark}

\theoremstyle{definition}

\crefname{eg}{Example}{Examples}

\crefname{problem}{Problem}{Problems}

\newcommand{\overbar}[1]{\mkern 1.5mu\overline{\mkern-1.5mu#1\mkern-1.5mu}\mkern 1.5mu}

\usepackage{indentfirst}

\allowdisplaybreaks

\let\oldfootnote\footnote
\renewcommand\footnote[1]{\oldfootnote{\hspace{.4mm}#1}}


\makeatletter
\renewenvironment{proof}[1][\proofname] {\par\pushQED{\qed}\normalfont\topsep6\p@\@plus6\p@\relax\trivlist\item[\hskip\labelsep\bfseries#1\@addpunct{.}]\ignorespaces}{\popQED\endtrivlist\@endpefalse}
\makeatother

\let\oldFootnote\footnote
\newcommand\nextToken\relax

\renewcommand\footnote[1]{%
    \oldFootnote{#1}\futurelet\nextToken\isFootnote}

\newcommand\isFootnote{%
    \ifx\footnote\nextToken\textsuperscript{,}\fi}

\usepackage{natbib}

\def\d{\mathrm{d}}
\def\E{\mathbb{E}}
\def\M{\mathcal{M}}
\def\F{\mathbb{F}}
\def\G{\mathbb{G}}
\def\R{\mathbb{R}}

\def\k{k^{th}}
\def\n{\circledR}

\usepackage{todonotes}


\onehalfspacing

\begin{document}

\title{\fontsize{15}{18} \selectfont \textbf{Rank-Guaranteed Auctions}\footnote{We thank Mohammad Akbarpour, Martin Bichler, Tilman B\"orgers, Kim-Sau Chung, Rahul Deb, Jason Hartline, Paul Klemplerer, Fuhito Kojima, Andrew Komo, Paul Milgrom, Michael Ostrovsky, Antonio Penta, Ning Sun, Satoru Takahashi, and Andrzej Skrzypacz for helpful discussions. This manuscript subsumes \cite{he2022order}.}}

\author{
Wei He\thanks{Department of Economics, The Chinese University of Hong Kong, hewei@cuhk.edu.hk}
\and Jiangtao Li\thanks{School of Economics, Singapore Management University, jtli@smu.edu.sg}
\and Weijie Zhong\thanks{Graduate School of Business, Stanford University, weijie.zhong@stanford.edu}
}

\maketitle

\begin{abstract}
	
We propose a combinatorial ascending auction that is ``approximately'' optimal, requiring minimal rationality to achieve this level of optimality, and is robust to strategic and distributional uncertainties. Specifically, the auction is \emph{rank-guaranteed}, meaning that for any menu $\M$ and any valuation profile, the ex-post revenue is guaranteed to be at least as high as the highest revenue achievable from feasible allocations, taking the \((|\M| + 1)^{th}\)-highest valuation for each bundle as the price. Our analysis highlights a crucial aspect of combinatorial auction design, namely, the design of menus. We provide simple and approximately optimal menus in various settings.


\end{abstract}

\newpage

\section{Introduction}

Auctions play a critical role in economic activities. For example, the online advertising sector generates trillions of dollars annually through the auctioning of advertising ``slots''. The Federal Communications Commission (FCC) has collected over 200 billion dollars via auctioning radio spectrum. Despite the critical role these auctions play, there is a surprising lack of theoretical groundwork to navigate the intricacies of auction design. This gap in knowledge stems from a unique challenge: ``\emph{not all slots are created equal}''---bidders typically have complex, combinatorial preferences for different slots. For instance, YouTube intersperses promotional videos at regular intervals within longer content. Here, some advertisers might see value in the repetition of their ads, leveraging the complementarity, while others may fear overexposure could lead to negative perceptions akin to ``spamming.'' Likewise, in the realm of telecommunications, while larger service providers may pursue nationwide radio spectrum licenses to maximize their coverage, smaller providers often seek only regional licenses, prioritizing local markets over national presence.

The theoretical study of auctions, in contrast to reality, takes aggressive simplifications to an extent that overlooks the nuanced realities of the markets. The game theoretic approach focuses on the incentives of the bidders (incentive compatibility) and auctioneers (optimality), while limiting to very specific environments. Iconic theories, such as \cite{my81}, made progress by assuming a \emph{single item}, \emph{independent valuations} among bidders, full \emph{Bayesian rationality}, and \emph{shared prior beliefs}. None of these assumptions hold water in the complex scenarios described earlier, highlighting a clear disconnect. Conversely, the domain of operations research and computer science typically emphasizes the procedural aspects of communicating preferences and determining assignments under complete preferential complexity. This focus often neglects the incentives for bidders to reveal their preferences truthfully and for auctioneers to maximize revenue (see, e.g., a review by \cite{cramton2006combinatorial}). Therefore, the development of a comprehensive auction theory that simultaneously addresses both \emph{incentives} and \emph{complexity} represents a significant and unfulfilled challenge.

The goal of this study is to address the dilemma between incentives and complexity in auction design. This resonates with what \cite{carroll2019design} speculates as the ``future of economic design'':
\begin{quote}
    ``My expectation---and my hope---is that progress in mechanism design over the coming decades will come from developing more useful general models of preferences, information, and actions in complex environments, relatively free of structural assumptions; and developing conceptual tools to argue for why certain kinds of mechanisms will work well in such environments.''
\end{quote}
To tackle the complexity of the practical environments, the same article then proposes:
\begin{quote}
    ``Making this progress toward more free-form models will require \ul{a shift in the criteria} by which research in economic theory is evaluated.''
\end{quote}
The methodology we develop in this paper hinges on the shift to a novel criterion of approximate optimality, termed the \textbf{rank guarantee}: the $\k$-rank guarantee is the maximal ex-post revenue when each feasible bundle can be sold at the $\k$-highest value among all bidders. 
With this approximation, we achieve a near-optimal resolution of the dilemma: it is possible to design a simple auction mechanism to respect the incentives---the auctioneer achieves a rank-guaranteed revenue as long as bidders avoid ``obviously'' bad strategies---while accommodating the full complexity---fully combinatorial preferences, without the need for any Bayesian prior. 

In our model, an auctioneer sells multiple items to several (potentially a large number of) strategic bidders, each with private valuations. We introduce a multi-item variant of the open ascending auction termed the (C)ombinatorial (As)cending (A)uction (CASA). Prior to the auction, the auctioneer curates a \emph{menu} of item bundles for allocation, denoted by $\M$. With the formal game theoretic form of the auction described in \cref{sec:CASA}, this auction model distills down to two straightforward principles:

\begin{enumerate}
\item Bidders are allowed to place binding bids (increase prices) on any assortment of bundles from the menu, even if these selections overlap.
\item The auction concludes when bid prices stabilize, with the winning bids being those that maximize the total selling price.
\end{enumerate}

Our findings reveal that CASA resolves the dilemma within certain approximation bounds. First, we show that the outcome of CASA respects the incentives of both bidders and the auctioneer:
\begin{itemize}
    \item \textbf{Bidder rationality}: All of our results apply to any \emph{non-obviously dominated} strategy profile, a collection that excludes strategies that are \emph{obviously dominated} in the sense that even in the most favorable case, they underperform some other strategies in their least favorable case (see \cite{li2017obviously} and \cite{li2021simple}). In other words, we allow the bidders to bid fully strategically, while making a weak rationality assumption that all we know is that they avoid obviously bad choices.
    
    \item \textbf{Approximate auctioneer optimality}: In CASA, any non-obviously dominated strategy profile yields an \emph{ex-post} revenue that is \emph{rank-guaranteed} --- achieving the maximal revenue when each bundle within the menu $\M$ can be sold at the $(|\M|+1)^{th}$-highest value among all bidders. Since the $1^{st}$-guarantee is the full ex-post trading surplus, the approximation we take is close in ``ordinal distance'' from the full surplus.
\end{itemize}
Second, we show how CASA maintains its performance in complex, unknown environments:
\begin{itemize}
    \item \textbf{Prior-free}: The auction format, the selection of strategy profile, and the revenue guarantee does not depend on any Bayesian prior on either the auctioneer's side or the bidders' side, rendering the mechanism \emph{prior-free} and our rank guarantee a prior-free approximation (see, e.g., Chapter 5 and 7 of \cite{hartline2013mechanism}).
    \item \textbf{Distributional robustness}: We quantify the rank-guarantee using canonical robust optimality criteria, i.e., the minimal ex-ante expectation when an adversarial nature chooses the joint distribution of values against the mechanism (see, e.g., \cite{gc17}). In the worst case, the revenue from CASA approximates the total surplus at the rate of $O\left(\frac{|\M|^2}{N}\right)$, i.e., CASA asymptotically achieves full surplus extraction when the number of bidders is large relative to the menu size.
\end{itemize}

To our knowledge, this paper is the first to introduce the rank-guarantee approximation. Compared to canonical approximation notions like constant-ratio guarantee and maxmin guarantee, the rank guarantee offers the added benefit of being easy to evaluate across various Bayesian and non-Bayesian environments. On the one hand, the rank guarantee provides an easily computable lower bound on the worst-case revenue, even when the underlying environment changes, such as when there are more bidders, when the menu changes, or when the auctioneer has additional information on the distribution of bidders' valuations. Such adaptivity allows us to further simplify CASA by studying menu design. On the other hand, the lower bound performance is also straightforward to assess outside adversarial scenarios. For example, in the canonical setting where the values are independent and identically distributed, the rank guarantee is an appealing approximation when \(N\) is large, as all order statistics converge to the upper bound of the valuation support. More generally, beyond understanding the worst-case scenario, the rank guarantee can be useful to consider the potential upside of a mechanism in best-case scenarios.

Our framework highlights a crucial aspect of combinatorial auction design, namely, the design of menus. Crucially, the rank guarantee we derive reveals a novel trade-off between \emph{menu sufficiency} and \emph{approximation efficiency}: a more complete menu achieves a higher benchmark total surplus but increases the rank $|\M|$. Therefore, to close the approximation gap, a key exercise is to reduce the menu size while maintaining the allocation efficiency, leveraging further knowledge about the bidders' preferences. We focus on a specific type of \emph{sufficient} menus that improve approximation efficiency ``for free''---menus that achieve the same benchmark total surplus as the complete menu. Specifically, we show that when the bidder's preference exhibits canonical preference structures, without loss of the benchmark total surplus, the size of menus can be reduced to be polynomial in the number of items being auctioned and so is the convergence rate of revenue guarantee. The result is summarized in \cref{tab:subfficient}. \medskip

\begin{table}[htbp]
\centering
\begin{tabular}{|c|c|c|}
\hline
\textbf{Preference} & \textbf{Simple and Sufficient Menu}  &  \textbf{Rank $k$} \\
\hline\hline
Weak substitutability & Individual items & $O(M)$ \\
\hline
Weak complementarity & Grand bundle  & $2$ \\
\hline
Partitional complementarity & Partitional bundles & $O(M)$ \\
\hline
Homogeneous goods & Menu of quantities  & $O(M^2)$ \\
\hline
\end{tabular}
\caption{Simple and sufficient menus}
\label{tab:subfficient}
\end{table}

The remainder of the introduction reviews related literature. \Cref{sec:CASA} introduces the auction format of CASA and the notion of rank-guarantee, and shows that CASA achieves the rank-guarantee. \Cref{sec:maxmin} bounds the worst-case performance of rank-guaranteed auctions under distributional uncertainties. \Cref{sec:sufficient} explores specific preference structures where CASA with simple menus performs as well as the complete menu.

\subsection{Related literature}

\paragraph{(Approximately) optimal auction design} Beyond the simple environment studied in \cite{my81} and \cite{bulow1996auctions}, solving for the exact optimal mechanism with confounding factors like multiple heterogeneous items, bounded distributional knowledge or bounded rationality is generally intractable. Various alternative optimality notions have been proposed to make progress (see surveys by \cite{roughgarden2015approximately} and \cite{hartline2013mechanism}). \cite{aggarwal2006knapsack} and \cite{goldberg2001competitive} obtained the ``constant fraction'' approximation in the auction of sponsored search and digital goods. Following a broader literature on robust mechanism design pioneered by \cite{gc17}, various authors have studied ``robustly'' optimal auctions that maximize the distributional worst-case revenue.\footnote{This research direction complements the large body of papers that focus on the case in which the designer does not have reliable information about the agents' hierarchies of beliefs about each other while assuming the knowledge of the payoff environment; see, for example, \cite{bm05}, \cite{ce07}, \cite{cl18}, \cite{du18}, \cite{bd19}, \cite{yamashita2022foundations}, and \cite{brooks2023structure}.} Particularly, \cite{hl21}, \cite{zhang2022correlationrobust}, and \cite{suzdaltsev2022distributionally} study robust versions of the single-unit auction problem in the distribution robust framework where the auctioneer has non-Bayesian uncertainty about the joint distribution of the bidders' valuations. In this paper, we propose the notion of rank-guarantee, a distinct notion of approximate optimality. In \cref{sec:maxmin}, we apply the distributional robustness analysis similar to that of \cite{gc17} and show that the rank-guarantee has an appealing worst-case performance.  

\paragraph{Multi-item auctions} Beyond the efficient Vickrey auction, few theoretic results have been established regrading multi-item auctions with combinatorial preferences. \cite{jehiel2001efficient} point out the vulnerability of efficiency under multidimensional bidder information. \cite{ausubel2002ascending} point out the poor revenue performance and strategic vulnerability of the Vickrey auction and propose simultaneous ascending auctions with package bidding (SAAPB). The multi-item auction design problem has also been extensively studied in the field of combinatorial auctions (see \cite{cramton2006combinatorial} for a survey). This literature mainly focuses on (approximately) efficient auction design and their communication/computational complexity, which is orthogonal to our focus on revenue performance and bidder incentives. Compared to other proposals like SAAPB (\cite{ausubel2002ascending}) and the Combinatorial Clock Auction (CCA, see \cite{ausubel2006clock} and \cite{levin2016properties}), CASA uses a simpler ``pay-as-bid'' rule, as opposed to personalized prices in SAAPB and demand reporting in CCA. Importantly, we assume that the bidders are fully strategic instead of single-minded. The menu design problem we tackle in \cref{sec:sufficient} is akin to \cite{rothkopf1998computationally}, which seeks to make a combinatorial auction computationally tractable by restricting the menu. We achieved the same goal while allowing for strategic bidders and maintaining the approximate optimality of the revenue.


\paragraph{Implementation in strategies that are not obviously dominated} We study outcomes when agents are rational in the sense of avoiding obviously dominated strategies. This solution concept draws from the idea of obvious strategy-proof mechanisms (see \cite{li2017obviously}) and is systematically studied in \cite{li2021simple}. As in \cite{li2021simple}, we assume that agents avoid obviously dominated strategies, but refrain from making assumptions regarding how agents select among strategies that are not obviously dominated. This methodology aligns with the spirit of implementation in undominated strategies; see for example \cite{carroll2014complexity}, \cite{borgers1991undominated}, \cite{jackson1992implementation}, and \cite{yamashita2015implementation}.

\section{CASA and rank-guarantee}

\label{sec:CASA}

\subsection{The auction environment}

\label{sec:environment}

There is a set $S$ of $M$ items to be sold to $N$ bidders. Let \(\mathcal{N} = \{1, 2, \ldots, N\}\). We write \( b \subseteq S \) to denote a generic bundle of items. Let \( \bm{v}^n =  \{v_b^n\}_{b \subseteq S} \) denote the valuation vector of bidder \(n\), where $v_b^n$ is bidder $n$'s valuation of bundle $b$. Valuations are normalized so that \( v^n_\emptyset = 0 \) and  \( v_b^n\in [\underline{v}, \overbar{v}] \) (\( \underline{v} \ge 0 \)) for all $b \neq \emptyset$. A generic valuation profile is denoted by \( \bm{v} = (\bm{v}^1, \bm{v}^2, \ldots, \bm{v}^N)\). Let $\M \subseteq 2^S$ denote a \emph{menu} of bundles chosen by the auctioneer. While $\M$ is a choice variable of the auctioneer, for now we take it as exogenously given; we defer the discussion of menu design to \cref{sec:sufficient}. Assume that \( N \ge |\M| + 1 \). Let
\begin{align*}
  \mathcal{B}(\M) = \{ X \subseteq \M \, |  \, \forall b, b' \in X, \, b \cap b' = \emptyset \}
\end{align*}
denote the set of \emph{feasible} allocations of bundles within the menu $\M$, i.e., all collections consisting of non-overlapping bundles. 

\subsection{The Combinatorial Ascending Auction}

\label{sec:auction-format}

We define the \emph{Combinatorial Ascending Auction} (CASA) as follows. The auction has an iterative structure, with the ``state of the auction'' characterized by the identity of the leading bidder and the leading price for each bundle. Initially, the leading price for each bundle is zero and none of the bidders is a leading bidder for any bundle. Bidders take turns raising the bids on the bundles, which determines new leading bidders and leading prices. The process repeats itself until when there are no new bids on any bundle. At that point, the auction stops. The auctioneer chooses a feasible allocation to maximize revenue, taking the leading prices as the prices for the bundles. There is also an activity rule designed to ensure that bidding activity starts out high and declines during the auction as prices rise far enough to discourage some bidders from continuing.

Formally, let $P \subset \R^+$ be a finite grid of feasible bids with grid size $\epsilon$ and $\max P > \overbar{v}$.
\begin{enumerate}[(1)]
\item \textbf{Initialization stage $t=0$}. Define
\begin{itemize}
\item the \emph{leading bidder vector} at stage $0$: $\bm{\phi}^0=(\phi^0_b)_{b\in \M}= \bm{0}$,
\item the \emph{leading price vector} at stage $0$: $\bm{p}^0=(p^0_b)_{b\in \M}=\bm{0}$,
\item the set of \emph{active bidders} at stage $0$: \(\mathcal{N}^0 = \{1, 2, \ldots,N\}\).
\end{itemize}
\item \textbf{Bidding stage $t \ge 1$}. An active bidder \(n\in\mathcal{N}^{t-1}\) observes $(\bm{p}^{t-1}, \{b \, | \, \phi_b^{t-1}=n\})$, and decides whether to quit, which bundles to bid on, and how much to bid.\footnote{The bidder selection rule and the observability of history is inconsequential for our analysis. For concreteness, we consider the selection rule that active bidders are cycled in ascending order according to their indices, and the observability of history is minimized to maximally protect privacy.}
\begin{itemize}
\item Bidder $n$ may choose to quit by submitting an empty bid only if $\{b \, | \, \phi_b^{t-1}=n\}=\emptyset$, i.e., bidder $n$ is not a leading bidder for any bundle in stage $t - 1$. Quitting is irreversible, that is, if bidder $n$ chooses to quit, then bidder $n$ becomes an inactive bidder and does not participate in future bidding rounds. Update:
\begin{itemize}
\item $\bm{\phi}^t=\bm{\phi}^{t-1}$, $\bm{p}^t=\bm{p}^{t-1}$, $\mathcal{N}^t = \mathcal{N}^{-1} \setminus \{n\}$.
\end{itemize}
\item If bidder $n$ chooses not to quit, then she submits a bid---a nonempty set of bundle-price pairs \(\{(b,p_b)\}\subset \M \times P\), subject to the requirements that (1) Leading bids are binding: if bidder $n$ is the leading bidder at some bundle in stage $t - 1$, then she must include that bundle in her bid with a bid that is \textit{weakly} higher than the current leading price for that bundle, and (2) Minimum bid increment: if bidder $n$ would like to bid on some bundle for which she is not the leading bidder, then her bid for that bundle must be \textit{strictly} higher than the current leading price for that bundle. Update: 
\begin{itemize}
\item $\phi_b^t = n$ and \(p_b^t = p_b\) for any bundle $b$ included in her bid,
\item $\phi_{b'}^t = \phi_{b'}^{t-1}$ and $p_{b'}^t=p_{b'}^{t-1}$ for any bundle $b'$ not included in her bid,
\item $\mathcal{N}^t=\mathcal{N}^{t-1}$.
\end{itemize}
\end{itemize}
Then, move on to the bidding stage $t+1$. 
\item \textbf{Allocation}. The auction ends (in stage $T$) when the leading prices stay constant for $N$ consecutive periods. The auctioneer chooses a feasible allocation to maximize
\begin{align*}
\max_{\bm{b}\in \mathcal{B}(\M)} \, \sum_{b\in \bm{b}} \, p_b^T.
\end{align*}
Denote the maximizer by $\bm{b}^*$. Each bundle $b\in\bm{b}^*$ is allocated to $\phi_b^T$ at the price $p_b^T$.
\end{enumerate}


In words, the auction format of CASA runs parallel ascending auctions for each bundle \(b \in \M\). Then, the items are allocated to maximize the total price. We discuss possible variants of CASA and its relation to existing auction formats in \cref{sec:discussion}.

\subsection{The rank-guarantee of CASA}

\label{sec:rank-guarantee}

In this subsection, we study the strategic behavior of the bidders and establish the rank-guarantee property of CASA. We only assume minimal rationality on the part of the bidders---bidders are rational in the sense of not playing obviously dominated strategies. Formally we adopt the solution concept of implementation in strategies that are not obviously dominated (see \cite{li2017obviously} and \cite{li2021simple}). We first sketch the intuition, and then provide the formal arguments.

At any history, consider a non-leading bidder's choice as to whether to quit. Obviously, as quitting is irreversible, quitting the auction leads to a best possible outcome of a zero payoff. Suppose that there is some bundle for which the bidder's valuation is higher than the current leading price for that bundle. Consider the following strategy where the bidder raises the price for this particular bundle and never revises her bid afterwards. Clearly, this continuing strategy guarantees a non-negative payoff for the bidder. Thus, at least for the purpose of deciding whether to quit, it is ``obviously optimal'' not to quit.

More formally, let $h = (t, (\mathcal{N}^0, \ldots, \mathcal{N}^{t-1}), (\bm{p}^0\ldots,\bm{p}^{t-1}), (\bm{\phi}^0,\ldots,\bm{\phi}^{t-1}))$ denote a history of the game in stage $t$, $H_t$ the set of such histories in stage $t$, and $H = \cup_{t \ge 0} H_{t}$. 
Suppose that bidder $n$ is the active bidder in some stage $t$ and $I_n$ is bidder $n$'s information set. Then the observed prices $\bm{p}$ and $n$'s leading bundles $\bm{b}$ are the same for all $h \in I_n$. Let $\mathcal{I}_n$ denote all information sets of $n$. Let $s_n:\mathcal{I}_n\to 2^{\M\times \R^+}$ denote bidder $n$'s (pure behavioral) strategy and $u_n(\bm{s},\bm{v}^n|h)$ the payoff to bidder $n$ given valuation vector $\bm{v}^n$, strategy profile $\bm{s}$, conditional on the current history $h$ and $n$ bidding in period $t$.

\begin{defi}
    A bidding strategy $s_n: \mathcal{I}_n \to 2^{\M\times P}$ is \emph{obviously dominated} if there exists $s_n'$ such that at any earliest point of departure $I_n$ between $s_n$ and $s_n'$,
    \begin{align*}
        \sup_{s_{-n},h\in I_n}u_n(\bm{s},\bm{v}^n|h)&\le \inf_{s_{-n},h\in I_n}u_n(s'_n,\bm{s}_{-n},\bm{v}^n|h);\\
        \inf_{s_{-n},h\in I_n}u_n(\bm{s},\bm{v}^n|h)&< \sup_{s_{-n},h\in I_n}u_n(s'_n,\bm{s}_{-n},\bm{v}^n|h).
    \end{align*}
\end{defi}
The first inequality is identical to the definition of the obvious dominance relation in \cite{li2017obviously}, i.e., the best outcome under $s_n$ is weakly worse than the worst outcome under $s_n'$. In addition, we require the dominated strategy to be non-equivalent in terms of the induced outcome to the strategy that dominates it. The second requirement guarantees that the set of non-obviously dominated strategies is non-empty. The earlier intuition then translates to:

\begin{lem}\label{lem:order}

If there exists an information set $I_n\in\mathcal{I}_n$ (with observed prices $\bm{p}$) such that $s_n(I_n)=\emptyset$ (i.e., bidder $n$ quits) and 
\begin{itemize}
\item $\exists \bm{p}' \in P$, $\tilde{\bm{b}} \in \arg\max\limits_{\bm{b}\in\mathcal{B}(\M)} \, \sum_{b'\in\bm{b}} \, p'_{b'}$ and $\displaystyle b \in \tilde{\bm{b}}$ such that $\bm{p}'\ge \bm{p}$ and $v^n_b> p'_b> p_b$,
\end{itemize}
then $s_n$ is obviously dominated.

\end{lem}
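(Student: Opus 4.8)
The plan is to exhibit an explicit alternative strategy $s_n'$ that obviously dominates the quitting strategy $s_n$ at the information set $I_n$, and then verify the two displayed inequalities in the definition of obvious dominance. Given the hypothesis, at $I_n$ the observed prices are $\bm p$, and there exist $\bm p' \in P$ (componentwise $\ge \bm p$), a revenue-maximizing feasible allocation $\tilde{\bm b}$ for the price vector $\bm p'$, and a bundle $b \in \tilde{\bm b}$ with $v^n_b > p'_b > p_b$. Define $s_n'$ to agree with $s_n$ everywhere before $I_n$ (so that $I_n$ is indeed the earliest point of departure), and at $I_n$ to submit the single bid $\{(b, p'_b)\}$; this is a legal bid since bidder $n$ is non-leading at $I_n$ (she was about to quit, which is only permitted when $\{b \mid \phi^{t-1}_b = n\} = \emptyset$) and $p'_b > p_b$ respects the minimum increment rule. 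After $I_n$, let $s_n'$ never revise: whenever $n$ is active again and is leading on $b$, she re-submits $\{(b, p'_b)\}$ (a legal weakly-higher leading bid); if at some point she is no longer leading on $b$, she quits.

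Next I would compute the two sides. For the left-hand (sup) side of the first inequality: under $s_n$, bidder $n$ quits at $I_n$ and receives payoff exactly $0$, so $\sup_{s_{-n}, h \in I_n} u_n(\bm s, \bm v^n \mid h) = 0$. For the right-hand (inf) side: under $s_n'$, the worst that can happen is that $n$'s bid on $b$ is eventually overtaken and $n$ wins nothing, yielding payoff $0$; in every other terminal outcome $n$ wins $b$ at some price $q_b$ with $p'_b \le q_b$, but because $s_n'$ never raises its own bid, $n$ can only win at price exactly $p'_b$ (leading prices only move when a bid is placed, and $n$'s bid on $b$ is always $p'_b$), giving payoff $v^n_b - p'_b > 0$. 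Hence $\inf_{s_{-n}, h \in I_n} u_n(s'_n, \bm s_{-n}, \bm v^n \mid h) = 0 \ge 0$, establishing the first inequality. For the second (strict) inequality, I need a single continuation of the opponents' play under which $s_n'$ strictly outperforms the worst case under $s_n$, i.e.\ yields strictly positive payoff; the natural candidate is the continuation where all other active bidders immediately quit (or bid nothing further). I must check that in that continuation the auction terminates with $b \in \tilde{\bm b}$ still selected by the revenue-maximizing allocation — this is where the hypothesis that $\tilde{\bm b}$ is a maximizer for $\bm p'$ and that $\bm p' \ge \bm p$ is used: once $n$ posts $p'_b$ and no one else moves, the final price vector is $\bm p$ with the $b$-coordinate replaced by $p'_b$; since $\bm p' \ge \bm p$ and $\tilde{\bm b}$ maximizes revenue under $\bm p'$, one argues $\tilde{\bm b}$ (or at least some maximizing allocation containing $b$) is still optimal, so $n$ wins $b$ at $p'_b$ for payoff $v^n_b - p'_b > 0 = \inf_{s_{-n}, h} u_n(\bm s, \cdot \mid h)$.

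The main obstacle I anticipate is the bookkeeping in the second inequality: one must produce a concrete opponent continuation and a valid terminal history under which the auctioneer's revenue-maximizing choice actually awards $b$ to $n$. The subtlety is that "$\tilde{\bm b}$ maximizes revenue at $\bm p'$" does not literally say it maximizes revenue at the realized price vector (which equals $\bm p$ except at coordinate $b$, where it is $p'_b \le$ the $b$-coordinate of $\bm p'$ — wait, $\bm p'_b = p'_b$, so on coordinate $b$ they agree, and elsewhere the realized vector is $\bm p \le \bm p'$). So the realized vector is coordinatewise $\le \bm p'$ and agrees with $\bm p'$ on $b$; I would argue that any allocation's revenue under the realized vector is at most its revenue under $\bm p'$, with equality for allocations consisting only of bundles where the two vectors agree — and since we may take the opponents to never have bid at all, the realized vector is $\bm 0$ except at $b$, where $\tilde{\bm b} \ni b$ trivially wins. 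The remaining minor points — that this continuation is consistent with some $h \in I_n$ and some $s_{-n}$, that the auction does terminate (prices are constant for $N$ periods once everyone is done), and that $I_n$ is genuinely the earliest departure point — are routine and I would dispatch them briefly.
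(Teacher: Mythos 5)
Your construction of $s_n'$ and your verification of the first inequality match the paper's proof, but there is a genuine gap in your verification of the strict (second) inequality: the opponent continuation you choose does not work, and the claim you lean on is false. You let the other bidders ``immediately quit (or bid nothing further),'' so the terminal price vector is $\bm{q}$ with $q_b=p'_b$ and $q_{b'}=p_{b'}$ for $b'\neq b$, and you assert that because $\tilde{\bm b}$ maximizes revenue at $\bm p'\ge\bm p$, some revenue-maximizing allocation at $\bm q$ still contains $b$. That inference fails: lowering the prices of the \emph{other} bundles in $\tilde{\bm b}$ from $p'_{b'}$ down to $p_{b'}$ can make an overlapping bundle, whose price is unchanged, strictly better, so that $b$ is excluded. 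Concretely, take the menu $\{\{1\},\{2\},\{1,2\}\}$, observed prices $\bm p=(0,0,10)$, $\bm p'=(3,8,10)$, $\tilde{\bm b}=\{\{1\},\{2\}\}$ (revenue $11\ge 10$), $b=\{1\}$, $v^n_b>3$. Under your continuation the final prices are $(3,0,10)$, the auctioneer allocates $\{1,2\}$ to its leading bidder, and bidder $n$'s payoff is $0$, not $v^n_b-p'_b>0$. Your fallback---``take the opponents to never have bid at all, so the realized vector is $\bm 0$ except at $b$''---is not available either: every history in $I_n$ has the same observed prices $\bm p$, so if $\bm p\neq\bm 0$ there is no such $h\in I_n$ (and a nonzero leading bid also means its leading bidder cannot quit).

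The paper closes exactly this step differently: for the sup under $s_n'$ it considers the scenario in which the subsequent active bidders bid the \emph{other} bundles up to $p'_{b'}$ whenever possible, so the auction ends with prices equal to $\bm p'$; only then does the hypothesis that $\tilde{\bm b}\ni b$ maximizes revenue at $\bm p'$ guarantee that $n$, the leading bidder on $b$ at price $p'_b$, is awarded $b$ and earns $v^n_b-p'_b>0$. (The paper also treats separately the corner case in which $n$ is the last active bidder, where no opponents remain to bid up; there all prices are $0$ and your continuation happens to work, but that does not repair the general case.) Replacing your ``opponents do nothing'' continuation with the ``opponents bid up to $\bm p'$'' continuation is the missing ingredient; the rest of your argument is sound.
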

\begin{proof}

To show that $s_n$ is obviously dominated, we explicitly construct another strategy $s_n'$ that obviously dominates it. Consider the information set $I_n \in \mathcal{I}_n$ (with observed prices $\bm{p}$) such that $s_n(I_n)=\emptyset$ and 
\begin{itemize}
\item $\exists \bm{p}' \in P$, $\tilde{\bm{b}} \in \arg\max\limits_{\bm{b}\in\mathcal{B}(\M)} \, \sum_{b'\in\bm{b}} \, p'_{b'}$ and $\displaystyle b \in \tilde{\bm{b}}$ such that $\bm{p}'\ge \bm{p}$ and $v^n_b> p'_b> p_b$.
\end{itemize}
Obviously, the payoff from $s_n$ conditional on any history $h\in I_n$ is zero. Let $s_n'$ be the same as $s_n$ before the information set $I_n$ and 
let bidder $n$ bid $s_n'(I_n)=(b,p'_b)$ and never revise her bid afterwards.

We first discuss a special case in which $|\mathcal{N}^{t-1}|=1$ and $n$ is not a current leading bidder (otherwise quitting the auction is not feasible for bidder $n$). Then, all the current prices must be $0$ and all the other bidders have quit (as this is the unique consistent history). Following the strategy $s_n'$, the auction ends with bidder $n$ bidding $(b,p'_b)$, leading to a positive payoff of $v_b^n - p_b' >0$ for bidder $n$.

Next, we consider the case in which $|\mathcal{N}^{t-1}|>1$.  It is clear that bidder $n$ will get a nonnegative payoff regardless of the value profiles and bidding strategies of other bidders. We show that the best possible payoff for bidder $n$ following the strategy $s_n'$ is positive. It suffices to consider the case in which any other subsequent active bidder bids $p'_{b'}$ for some bundle $b'$ whenever possible, the auction ends with prices $\bm{p}'$, leading to a positive payoff of $v_b^n - p_b' > 0$ for bidder $n$.
\end{proof}
Note that \cref{lem:order} suggests that our intuition is incomplete as we cannot fully rule out strategies that quit when the bidder's value for some bundle is above the leading price for that bundle. To rule out all such strategies, it further requires the existence of some scenario under which bidder $n$ may be pivotal: a strictly profitable bid of $n$ may ever be selected in the end. Nevertheless, we show below that the non-pivotal bidders are inconsequential for our analysis. 

Let $S^n_{NOD}(P)$ denote the set of non-obviously dominated strategies of $n$ given price grid $P$. Let $R(\bm{s},\bm{v})$ denote the revenue to seller given value profile $\bm{v}$ and strategy profile $\bm{s}$. Define
\begin{align*}
    \underline{R}_{CASA}(\bm{v}):=\varliminf_{\epsilon\to 0} \, \inf_{s_n\in S^n_{NOD}(P)} \, R(\bm{s},\bm{v}),
\end{align*}
where the first limit inf is taken over $\epsilon\to 0$ and all $P$ with grid size $\epsilon$. That is, \(\underline{R}_{CASA}(\bm{v})\) is the worst-case ex-post revenue from CASA under non-obviously dominated strategies in the limit where grid $P$ becomes dense.

Given the valuation profile \(\bm{v}\) and menu \(\M\), the \(\k\)-guarantee is defined as the maximal revenue from feasible allocations within \(\M\), taking the \(\k\)-highest valuations for each bundle as the price.

\begin{defi}

The \(\k\)-guarantee given the menu \(\M\) and the value profile $\bm{v}$ is
$$\displaystyle R^k_{\M}(\bm{v}):=\max_{\bm{b} \in \mathcal{B}(\M)} \, \sum_{b \in \bm{b}} \, v^{(k)}_b,$$
where $v^{(k)}_b$ denotes the $\k$-highest value of bundle $b$.

\end{defi}
Our key observation is that CASA achieves the \(\k\)-guarantee as long as bidders avoid strategies that are obviously dominated. In other words, for CASA to achieve the \(\k\)-guarantee, we only need minimal rationality on the part of the bidders.

\begin{thm} \label{thm:1}

\(\displaystyle\underline{R}_{CASA}(\bm{v})\ge{R}^k_{\M}(\bm{v})\) for $k=|\M|+1$.

\end{thm}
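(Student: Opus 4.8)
The plan is to establish the bound $\underline{R}_{CASA}(\bm v)\ge R^k_{\M}(\bm v)$ with $k=|\M|+1$ by showing that whenever the bidders play non-obviously dominated strategies, the final leading price $p_b^T$ on \emph{every} bundle $b$ that the auctioneer ends up selling must be at least (approximately) the $k$-highest valuation $v^{(k)}_b$, so that the revenue-maximizing feasible allocation chosen by the auctioneer extracts at least $R^k_{\M}(\bm v)$ up to an error vanishing as $\epsilon\to 0$. Concretely, fix the feasible allocation $\bm b^{\dagger}\in\mathcal B(\M)$ that attains $R^k_{\M}(\bm v)=\sum_{b\in\bm b^{\dagger}}v^{(k)}_b$. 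I would argue that at the terminal history, for each $b\in\bm b^{\dagger}$, the leading price satisfies $p_b^T\ge v^{(k)}_b-\epsilon$; summing over $b\in\bm b^{\dagger}$ and noting that the auctioneer maximizes total price over $\mathcal B(\M)$ gives $R(\bm s,\bm v)\ge\sum_{b\in\bm b^{\dagger}}p_b^T\ge R^k_{\M}(\bm v)-|\M|\epsilon$, and then take $\varliminf_{\epsilon\to0}$.

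The heart of the argument is the per-bundle price lower bound, and this is where \Cref{lem:order} does the work. Consider any bundle $b\in\bm b^{\dagger}$ and suppose, for contradiction, that the terminal price $p_b^T$ is strictly below $v^{(k)}_b$ by more than a grid step. Since $v^{(k)}_b$ is the $k$-highest valuation of $b$, there are at least $k=|\M|+1$ bidders whose value for $b$ is at least $v^{(k)}_b>p_b^T$. At the terminal history the leading-price vector is $\bm p^T$, and $\bm b^{\dagger}$ itself certifies that there is a feasible allocation containing $b$ whose total price is maximal in the relevant sense; more carefully, I would take $\bm p'$ to be $\bm p^T$ with the $b$-coordinate nudged up to a grid point strictly between $p_b^T$ and $v^{(k)}_b$, check that $b$ lies in a revenue-maximizing feasible allocation under $\bm p'$ (using that raising one coordinate of a maximizer keeps that coordinate's bundle in a maximizer, together with $N\ge|\M|+1$ so enough bundles are "available"), and conclude from \Cref{lem:order} that any bidder among those $k$ high-value bidders who has quit (or is about to be called to bid and would pass) is playing an obviously dominated strategy. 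Hence all $k$ such bidders are still active and willing to raise $b$ above $p_b^T$ — but the auction only terminates after $N$ consecutive periods with no new bids, and any one of these active, non-leading, high-value bidders would, when cycled to move, be obliged (on pain of obvious dominance) not to quit and could profitably outbid the current leader on $b$. This contradicts termination at price $p_b^T$. Therefore $p_b^T\ge v^{(k)}_b$ up to one grid increment.

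The main obstacle, and the step deserving the most care, is the bookkeeping around "$k=|\M|+1$ active high-value bidders suffice to force the price up." One has to rule out the possibility that all $k$ high-value bidders are simultaneously tied up as leading bidders on \emph{other} bundles and therefore cannot be compelled to raise $b$: this is exactly why the rank is $|\M|+1$ rather than $2$, since at most $|\M|$ bidders can be leaders (one per bundle in $\M$), so at least one of the $k$ high-value bidders for $b$ is a non-leader and is governed by \Cref{lem:order}. I would also need to handle the hypothesis of \Cref{lem:order} precisely — verifying the existence of $\bm p'\ge\bm p$ with $b$ in a max-revenue feasible allocation and $v^n_b>p'_b>p_b$ — which requires the contradiction hypothesis $p_b^T<v^{(k)}_b$ to leave room for a strictly intermediate grid point, hence the $\varliminf_{\epsilon\to0}$ and the additive $|\M|\epsilon$ slack. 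A secondary subtlety is that \Cref{lem:order} only forbids quitting at an information set where a profitable pivotal bid exists; as the paragraph after \Cref{lem:order} notes, non-pivotal bidders are "inconsequential," so I would phrase the argument so that it is triggered precisely at the last moment before termination, where the bid under consideration \emph{is} pivotal (it would be selected in $\bm b^{\dagger}$), making the lemma applicable.
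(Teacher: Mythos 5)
Your reduction to a per-bundle price bound is where the argument breaks: the claim that at termination $p_b^T\ge v^{(k)}_b-\epsilon$ for \emph{every} $b\in\bm b^{\dagger}$ is false, and the paper explicitly flags this (``elimination of obviously dominated strategies does \emph{not} guarantee the ex-post prices to be above the $\k$-highest values''). Concretely, let $\M=\{b,c\}$ with $b\cap c\neq\emptyset$, $N=4$, bidder $1$ valuing $c$ at $200$ and nothing else, bidders $2,3,4$ valuing $b$ at $10$ and nothing else. Then $k=3$, $v^{(3)}_b=10$, $v^{(3)}_c=0$, so $\bm b^{\dagger}=\{b\}$ and $R^3_{\M}(\bm v)=10$. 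If bidder $1$ jump-bids $c$ to $100$ first, then for bidders $2,3,4$ no bid on $b$ below their value $10$ can ever lie in a price-maximizing allocation (any $\bm p'\ge\bm p$ has $p'_c\ge 100>p'_b$), so \cref{lem:order} has no bite, quitting is not obviously dominated (their best conceivable continuation payoff is $0$), and the auction can end with $p^T_b=0$. The theorem still holds because revenue is $100\ge 10$, but your per-bundle bound on the member of $\bm b^{\dagger}$ fails. Your attempted patch does not repair this: being in $\bm b^{\dagger}$ means $b$ maximizes $\sum_b v^{(k)}_b$, not $\sum_b p_b$, and your auxiliary step ``raising one coordinate of a maximizer keeps that bundle in a maximizer'' presupposes that $b$ already lies in a \emph{price}-maximizing allocation, which is exactly what can fail. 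So what you call a ``secondary subtlety'' (pivotality) is in fact the central difficulty, and it cannot be resolved bundle-by-bundle.

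The paper's proof avoids this by arguing on the aggregate revenue. Suppose toward a contradiction that $\max_{\bm b\in\mathcal B(\M)}\sum_{b\in\bm b}p^T_b$ falls short of $\max_{\bm b}\sum_{b\in\bm b}(v^{(k)}_b-\epsilon-\delta)$. Then raise the prices of the deficient bundles (those with $p^T_b<v^{(k)}_b-\epsilon-\delta$) one at a time toward $v^{(k)}_b-\delta$; if no such raise ever strictly increased the price-maximum, the original maximum would already be at least $\sum_{b\in\bm b^{\dagger}}(v^{(k)}_b-\epsilon-\delta)$, contradicting the shortfall, so there is a \emph{first pivotal} bundle $\tilde b$ that enters a price-argmax under the raised vector $\bm p'\ge\bm p^T$. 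Only for this $\tilde b$ does one then run your (correct) counting step: at most $|\M|$ bidders lead at the end, so among the $k=|\M|+1$ bidders with $v^n_{\tilde b}\ge v^{(k)}_{\tilde b}$ some bidder quit at a history with prices below $\bm p^T\le\bm p'$, and \cref{lem:order} applies to that bidder, yielding the contradiction. In short, the revenue guarantee is proved as an aggregate statement whose deficit certifies the existence of a pivotal bundle; it cannot be obtained by lower-bounding the price of each bundle in $\bm b^{\dagger}$, which is the route you took.
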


\begin{proof}

Consider a grid $P$ with any grid size $\epsilon > 0$. We show that as long as bidders avoid obviously dominated strategies, for any value profile $\bm{v}$,
\begin{align*}
\max_{\bm{b}\in \mathcal{B}(\M)} \, \sum_{b\in \bm{b}} \, p^T_b \ge \max_{\bm{b}\in \mathcal{B}(\M)} \, \sum_{b\in \bm{b}} \, (v^{(k)}_b-\epsilon).
\end{align*}
Suppose to the contrary, this is not true. Then, there exists strategy profile $s$ (with $s_n$ being not obviously dominated for each bidder $n$), value profile $\bm{v}$, and $\delta > 0$ such that \begin{align*}
\max_{\bm{b} \in \mathcal{B}(\M)} \, \sum_{b\in \bm{b}} \, p^T_b < \max_{\bm{b} \in \mathcal{B}(\M)} \, \sum_{b\in \bm{b}} \, (v^{(k)}_b -\epsilon - \delta).
\end{align*}
Evidently, there exists some bundle $b$ such that $p_b^T < v_b^{(k)}-\epsilon-\delta$, or equivalently, $p_b^T + \epsilon < v_b^{(k)} - \delta$. For each such bundle $b$, raise the price of bundle $b$ to (the closest price below) $v_b^{(k)} - \delta$ sequentially until we find the first pivotal bundle $\tilde{b}$ when $\max_{\bm{b}\in \mathcal{B}(\M)}\sum_{b\in \bm{b}} p_b$ is strictly improved by such increase in prices.

Since $k=|\M|+1$ and $p_{\tilde{b}}^T < v_{\tilde{b}}^{(k)} - \epsilon - \delta$, there exists a bidder $n$ with $v_{\tilde{b}}^n\ge v_{\tilde{b}}^{(k)}$ that quits the auction before the auction ends. Let $h$ be the (on-path) history at which $n$ quits (note that at this history, $n$ must not be a leading bidder and $p_{\tilde{b}}(h) \le p_{\tilde{b}}^t\le v_{\tilde{b}}^n-\delta-\epsilon$, where $p_{\tilde{b}}(h)$ is the price of bundle $\tilde{b}$ at the history $h$). Let $h\in I_n$. Then, $s_n(I_n)=\emptyset$. Let $\bm{p}'$ be the raised prices and $\tilde{\bm{b}} \in\arg\max\limits_{\bm{b}\in\mathcal{B}(\M)}\sum_{b'\in\bm{b}} p'_{b'}$. Then $\tilde{b} \in \tilde{\bm{b}}$, and $\bm{p}'$ satisfies the condition in \cref{lem:order}. Therefore, $s_n$ is obviously dominated. We arrive at a contradiction.
\end{proof}

 As we have pointed out following \cref{lem:order}, elimination of obviously dominated strategies does \emph{not} guarantee the ex-post prices to be above the $\k$-highest values. To establish \cref{thm:1}, we prove in addition that the behaviors of non-pivotal bidders are inconsequential. With \cref{thm:1}, we say that CASA is a rank-guaranteed auction format as long as we are comfortable assuming that bidders avoid strategies that are obviously dominated. 
 
 CASA and its rank-guarantee become extremely simple when $|\M|=1$, where CASA reduces to the canonical English auction of the unique feasible bundle and the $\k$-guarantee becomes the second highest value. In this special case, \cref{thm:1} shares the insight from \cite{li2017obviously} that English auction achieves the $2^{nd}$-guarantee via an obviously strategy-proof outcome. To understand our general result, we provide an alternative interpretation of this special case. The English auction made it obvious that a bidder should avoid losing the auction when there is a remaining surplus. Therefore, it maximally utilizes the competition among \emph{losers} of the auction, pushing the price to the highest value among them. Meanwhile, it does not further screen the \emph{winner} at all. CASA is exactly the multi-item generalization of this philosophy: it maximally invokes competition among losers by making it obviously suboptimal to lose with a remaining surplus. Meanwhile, it does not ``care'' at all which winner gets which bundle and how much extra he pays. As a result, the auctioneer extracts at least as much surplus as that from the losers only, which is exactly the rank-guarantee.
\subsection{Discussions}

\label{sec:discussion}

\paragraph{Alternative formats:} As we have discussed above, CASA is an intuitive extension of the canonical single-item English auction. The idea of generalizing the English auction to accommodate multiple items has been extensively explored. However, crucial to the design of CASA is a set of unique properties that cope with the incentives of the players.  In comparison to the simultaneous ascending auction (SAA, \cite{milgrom2000putting}), allowing for bidding on bundles (package bidding) has the advantage of mitigating the demand reduction problem or the exposure problem, leading to sufficient competition. Compared to the combinatorial variants of SAA like SAAPB (\cite{ausubel2002ascending}) and CCA (\cite{ausubel2006clock}), CASA uses a simpler "pay-as-bid" rule so that the bidders find it straightforward to determine the remaining surplus from the auction, leading to the obvious strategy-proof implementation.\footnote{The SAAPB is indeed $\k$-guaranteed under non-strategic ``straightforward bidding'' strategies (Theorem 1 of \cite{ausubel2002ascending}). However, whether SAAPB is $\k$-guaranteed with fully strategic bidders is yet unknown to us.} Compared to the majority of iterative combinatorial auctions (Chapter 2, \cite{cramton2006combinatorial}), which seek to replicate VCG (known not to be rank-guaranteed\footnote{Imagine the case $\M=\{(a),(b),(a,b)\}$. $v_{(a,b)}=1$ for all bidders. $v_a=v_b=0$ for all bidders except for two, whose value for $a$ and $b$ are $1$. The VCG revenue is $0$, while the $\k$-guarantee is $1$ for any $k\ge 2$.}) and maximize efficiency, CASA has much better revenue performance. The design of CASA accommodates fully strategic bidders, as opposed to single-minded or myopic bidders assumed in the cited papers. 

Except for the several crucial features, the exact format of CASA can be flexibly tailored to the auctioneer's needs and practical considerations without losing its various desirable properties. Notably, the menu $\M$ is a design variable of the auctioneer, which we examine in detail in Section \ref{sec:sufficient}. Additionally, instead of allowing bidders to bid on multiple bundles, we could restrict them to bid on a single bundle when it is their turn to move (if a bidder is already leading for some bundle when it is her turn to move, she could not bid on any other bundle but she would remain an active bidder). Such restriction prevents the type of collusive communication documented in \cite{grimm2003low,jehiel2001european}. Moreover, we could also allow the bidders to observe the entire history if that is considered desirable for transparency purposes.

\paragraph{Robustness to collusion / irrationality:} \cite{klemperer2002really} identifies collusion as the first concern that ``\emph{really matters in auction design}''. We illustrate below that our framework allows us to quantify the impact of collusive or irrational behavior on the performance of CASA. Recall that the key force that drives the revenue to the rank-guarantee is the competition among the losing bidders. Therefore, when there are non-strategic bidders, one can simply consider the competition among the subset of rational losing bidders.

Formally, when the number of non-strategic bidders is bounded, then \cref{thm:1} still holds when $k$ is relaxed by the number of non-strategic bidders.
\begin{prop}
    Suppose there are $j$ non-strategic bidders, then $\underline{R}_{CASA}(\bm{v})\ge R_{\M}^{k}(\bm{v})$ for $k=|\M|+1+j$.
\end{prop}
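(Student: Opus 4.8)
The plan is to mimic the proof of \cref{thm:1}, adapting the counting argument so that the presence of $j$ non-strategic bidders is absorbed into the rank. First I would fix a grid $P$ with size $\epsilon>0$ and, arguing by contradiction exactly as in \cref{thm:1}, suppose that for some strategy profile $\bm{s}$ in which each \emph{strategic} bidder plays a non-obviously-dominated strategy (and the $j$ non-strategic bidders play arbitrarily), some value profile $\bm{v}$, and some $\delta>0$, the terminal prices satisfy $\max_{\bm{b}\in\mathcal{B}(\M)}\sum_{b\in\bm{b}}p^T_b < \max_{\bm{b}\in\mathcal{B}(\M)}\sum_{b\in\bm{b}}(v^{(k)}_b-\epsilon-\delta)$ with $k=|\M|+1+j$. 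As before, there is then a bundle $b$ with $p^T_b < v^{(k)}_b-\epsilon-\delta$; raising such underpriced bundles sequentially to (just below) $v^{(k)}_b-\delta$ produces a first \emph{pivotal} bundle $\tilde b$, i.e.\ one whose price increase strictly improves $\max_{\bm{b}\in\mathcal{B}(\M)}\sum_{b\in\bm{b}}p_b$.

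The key step is the counting: since $p^T_{\tilde b} < v^{(\,k\,)}_{\tilde b}-\epsilon-\delta$ with $k=|\M|+1+j$, there are at least $|\M|+1+j$ bidders whose value for $\tilde b$ strictly exceeds $p^T_{\tilde b}+\epsilon+\delta$ (hence exceeds $p^T_{\tilde b}$), yet none of them is the leading bidder for $\tilde b$ at the end. At most $|\M|$ bidders can be leading bidders on the bundles in the optimal terminal allocation — actually the only bidder we must exclude is the terminal leader of $\tilde b$ itself, but keeping the cruder bound $|\M|$ is harmless — and at most $j$ of the remaining high-value bidders are non-strategic. Subtracting, at least one \emph{strategic} bidder $n$ with $v^n_{\tilde b}\ge v^{(k)}_{\tilde b}$ is not leading on $\tilde b$ at termination, so this bidder must have quit at some on-path history $h$ where $n$ is not a leading bidder and $p_{\tilde b}(h)\le p^T_{\tilde b}\le v^n_{\tilde b}-\delta-\epsilon$. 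Exactly as in \cref{thm:1}, taking $\bm{p}'$ to be the raised prices and $\tilde{\bm b}\in\arg\max_{\bm{b}\in\mathcal{B}(\M)}\sum_{b'\in\bm{b}}p'_{b'}$, we have $\tilde b\in\tilde{\bm b}$ and the hypotheses of \cref{lem:order} are met, so $s_n$ is obviously dominated — a contradiction. Letting $\epsilon\to 0$ along with $\delta\to 0$ yields $\underline{R}_{CASA}(\bm{v})\ge R^k_{\M}(\bm{v})$ for $k=|\M|+1+j$.

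The main obstacle is making the notion of ``non-strategic bidder'' precise enough that \cref{lem:order} still applies to the remaining strategic bidders without alteration. The argument only needs that each strategic bidder avoids obviously dominated strategies against \emph{whatever} the other bidders (strategic or not) do — and \cref{lem:order}'s obvious-dominance construction already quantifies over all $s_{-n}$, so nothing in that lemma changes. One should, however, confirm that the $j$ non-strategic bidders cannot perturb the existence of the quitting history $h$: the contradiction hypothesis fixes the realized play, so $h$ is the actual on-path history at which $n$ quits, and the $\sup$ in \cref{lem:order} over $s_{-n}$ only makes the deviation $s_n'$ look \emph{better}, never worse. Hence the same one-line appeal to \cref{lem:order} closes the proof, and the only genuinely new content is the additive ``$+j$'' in the counting bound above.
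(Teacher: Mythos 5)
Your proposal is correct and takes essentially the same route as the paper, whose proof is exactly this: re-run the counting step in the proof of \cref{thm:1} with $k=|\M|+1+j$ so that, after excluding the at most $|\M|$ bidders who are terminal leading bidders and the at most $j$ non-strategic bidders, at least one strategic bidder with $v^n_{\tilde b}\ge v^{(k)}_{\tilde b}$ must have quit, at which point \cref{lem:order} delivers the contradiction. One small caveat: your aside that ``the only bidder we must exclude is the terminal leader of $\tilde b$ itself'' is not accurate---a high-value bidder who is leading on some \emph{other} bundle at termination has not quit, so \cref{lem:order} would not apply to her---but since you explicitly keep the cruder bound $|\M|$, this does not affect your argument.
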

\begin{proof}
    Observe that in the proof of \cref{thm:1}, since $k=|\M|+1+j$, there exists at least one \emph{strategic} player $n$ that quits the auction before period $t$ and $v_{\tilde{b}}^n\ge v_{\tilde{b}}^{(k)}$. The rest of the proof follows.
\end{proof}
\cref{prop:collusion} applies when a relatively small number of bidders play collusive/irrational strategies. Then, they reduce the rank of the guarantee by a small amount. Nevertheless, CASA is still rank-guaranteed. 

What if \emph{all} bidders are collusive? When the bidders form coalitions, and they strategically maximize group-level payoffs\footnote{Each group can freely shift allocations within the group and maximize the total payoff.}, \cref{thm:1} still holds when $k$ is scaled by the coalition sizes.

\begin{prop}\label{prop:collusion}
    Suppose bidders are partitioned into strategic coalitions $\{c_i\}_{i\in I}$, where the index is chosen such that $|c_i|$ decreases in $i$. Then, $\underline{R}_{CASA}(\bm{v})\ge R_{\M}^{k}(\bm{v})$ for $k=\sum_{i\le |\M|}|c_i|+1$.
\end{prop}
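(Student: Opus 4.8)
The plan is to transcribe the proofs of \cref{lem:order} and \cref{thm:1}, with coalitions playing the role of individual bidders. First I would fix the coalition-level primitives: each coalition $c_i$ is a single meta-agent whose pure strategy coordinates the bids of all its members as a function of the coalition's information (a common refinement of its members' information sets), and whose payoff is the sum of its members' payoffs. Because a coalition that wins a set of pairwise-disjoint bundles may reassign them internally, its effective value for a bundle $b$ is $v^{c_i}_b := \max_{n \in c_i} v^n_b$. With ``(not) obviously dominated'' defined for coalition strategies exactly as in the Definition preceding \cref{lem:order} (with the coalition's payoff in place of a bidder's), \cref{lem:order} has a verbatim analog: if at some information set with observed prices $\bm p$ the coalition's last active member quits (so the coalition leads no bundle and its continuation payoff is $0$), and there exist $\bm p' \ge \bm p$ in $P$, $\tilde{\bm b} \in \arg\max_{\bm b \in \mathcal B(\M)} \sum_{b' \in \bm b} p'_{b'}$ and $b \in \tilde{\bm b}$ with $v^{c_i}_b > p'_b > p_b$, then the coalition strategy is obviously dominated. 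The dominating deviation is precisely the one in \cref{lem:order} --- instruct that last member to bid $(b, p'_b)$ and never revise --- which guarantees the coalition a nonnegative payoff, while in its most favorable continuation the coalition wins $b$ at price $p'_b$ and, after internal reallocation, earns $v^{c_i}_b - p'_b > 0$, compared with $0$ from quitting; the verification of both inequalities at the unique earliest point of departure is identical to that in \cref{lem:order}.

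Next I would run the argument of \cref{thm:1}. Fix a grid $P$ with size $\epsilon$ and suppose, toward a contradiction, that some profile of non-obviously-dominated coalition strategies, value profile $\bm v$, and $\delta > 0$ satisfy $\max_{\bm b \in \mathcal B(\M)} \sum_{b \in \bm b} p^T_b < \max_{\bm b \in \mathcal B(\M)} \sum_{b \in \bm b} (v^{(k)}_b - \epsilon - \delta)$ with $k = \sum_{i \le |\M|} |c_i| + 1$. As in \cref{thm:1}, some bundle satisfies $p^T_b + \epsilon < v^{(k)}_b - \delta$; raise the prices of all such bundles, one at a time, to the largest grid point below $v^{(k)}_b - \delta$ (which exceeds $p^T_b$ because the interval has length larger than $\epsilon$), stopping at the first bundle $\tilde b$ whose raise strictly increases $\max_{\bm b \in \mathcal B(\M)} \sum_{b \in \bm b} p_b$; write $\bm p' \ge \bm p^T$ for the resulting prices and pick $\tilde{\bm b} \in \arg\max_{\bm b \in \mathcal B(\M)} \sum_{b' \in \bm b} p'_{b'}$, so that $\tilde b \in \tilde{\bm b}$ and $p^T_{\tilde b} < p'_{\tilde b} < v^{(k)}_{\tilde b} - \delta$.

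The genuinely new step is the counting. When CASA terminates, every active bidder must lead at least one bundle, since on her turn a non-leading active bidder can only quit or strictly raise some price and a raise would contradict termination; hence $|\mathcal N^T| \le |\M|$, and these active bidders lie in at most $|\M|$ coalitions, which by the nonincreasing indexing of the $c_i$ contain at most $\sum_{i \le |\M|} |c_i|$ bidders in total. Since $k = \sum_{i \le |\M|}|c_i| + 1$ exceeds this count, among the $k$ bidders with the highest values for $\tilde b$ there is one, say $n$, whose coalition $c$ has no active member at termination --- i.e., every member of $c$ has quit --- and $v^n_{\tilde b} \ge v^{(k)}_{\tilde b} > p'_{\tilde b}$, so $v^c_{\tilde b} \ge v^n_{\tilde b} > p'_{\tilde b}$. (We may assume there are more than $|\M|$ coalitions, so $k \le N$ and $v^{(k)}_{\tilde b}$ is well-defined; otherwise $k > N$, no losing coalition need exist, and the bound is understood with the convention $v^{(k)}_b = \underline v$.) Let $h$ be the on-path history at which $c$'s last active member quits; prices are nondecreasing, so the price of $\tilde b$ at $h$ is at most $p^T_{\tilde b} < p'_{\tilde b}$ and $\bm p' \ge \bm p^T \ge \bm p(h)$. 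Thus the conditions of the coalition analog of \cref{lem:order} hold at the information set containing $h$ (with $\bm p'$, $\tilde{\bm b}$, $\tilde b$), so $c$'s strategy is obviously dominated, contradicting the hypothesis. Letting $\epsilon \to 0$ along dense grids yields $\underline R_{CASA}(\bm v) \ge R^k_{\M}(\bm v)$.

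The main obstacle is not the combinatorics but pinning down the coalition-level solution concept --- the coalition's information sets, its notion of ``earliest point of departure,'' and its payoff as a single meta-agent --- carefully enough that the verification in \cref{lem:order} carries over word for word once bundles are valued at $\max_{n \in c} v^n_b$ via internal reallocation. Once that reduction is in place, the only new ingredient is the bound ``at most $|\M|$ active coalitions at termination, hence rank $\sum_{i \le |\M|}|c_i| + 1$,'' which is exactly where the nonincreasing indexing of the coalitions is used; everything else is a direct transcription of the proofs of \cref{lem:order} and \cref{thm:1}.
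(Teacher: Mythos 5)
Your proof is correct and follows essentially the same route as the paper's: treat each coalition as a meta-agent with effective value $\max_{n\in c}v^n_b$, rerun the proof of \cref{thm:1}, and observe that at most $|\M|$ bidders (hence at most the $|\M|$ largest coalitions, totalling $\sum_{i\le|\M|}|c_i|$ bidders) never quit, so some fully-quitting coalition contains a bidder with value at least $v^{(k)}_{\tilde b}$, at whose last member's quitting history the \cref{lem:order}-style deviation applies --- the paper's own proof is just a terse statement of exactly this reduction. The only quibble is your parenthetical convention $v^{(k)}_b=\underline{v}$ when $k>N$: since $\underline{v}$ may be strictly positive, the stated bound can then fail (e.g., with a single grand coalition), so that degenerate case should simply be excluded, as the paper implicitly does; this does not affect your main argument.
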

\begin{proof}
    Observe that in the proof of \cref{thm:1}, since $k=\sum_{i\le |\M|}|c_i|+1$, there exists at least one coalition of players $c$ that all quit the auction before period $t$ and $\max_{n\in c}\{v_{\tilde{b}}^n\}\ge v_{\tilde{b}}^{(k)}$. Let $h$ be the (on-path) history at which the last member $n$ in the coalition quits (note that at this history, $n$ must not be a leading bidder and $p_{\tilde{b}}\le p_{\tilde{b}}^t\le \max_n\{v_{\tilde{b}}^n\}-\delta-\epsilon$). Let $h\in I_n$. Then, $s_n(I_n)=\emptyset$. Obviously, quitting gives the entire group zero payoff while bidding $p'_{\tilde{b}}$ guarantees a non-negative payoff.
    Suppose all other bidders bid up to $p_b'$ when it is their turn, the auction ends with $\bm{p}'$ and the group obtains a strictly positive payoff. Therefore, $s_n$ is obviously dominated for coalition $c$.
\end{proof}

The intuition behind the two extensions is exactly the competition among losing strategic bidders. The price of each bundle must be higher than the value of any \emph{losing strategic} bidder or any \emph{losing coalition group} as otherwise they will outbid the price. Of course, \cref{prop:collusion} has no bite when $k$ is large compared to $N$; hence, it should be interpreted as the strategic robustness of CASA only in relatively thick markets. Nevertheless, CASA is also aligned with the philosophy of anti-collusion design, even in thin markets, because CASA permits the minimum transmission of information. For example, anonymity prevents the reciprocity behavior documented in \cite{cramton2000collusive}.

\paragraph{Efficiency:} In CASA, although bidders can fully avoid the exposure problem by bidding only on bundles with a positive surplus, they may strategically expose themselves. It is \emph{not} obviously dominated to bid strictly above the true valuation for a bundle.\footnote{Consider, for example, the case with three bidders $1,2,3$ and three items $a,b,c$ . Bidder $1$ only wants $a$, bidder $2$ only wants $b$, and bidder $3$ only wants the grand bundle $\{a,b,c\}$. The valuation of each bidder for the desired bundle is $1$. By strategically bidding up item $b$ even though bidder $1$ gets zero value from it, bidder $1$ can reduce the bid required for him to win item $a$, creating an exposure problem for $1$.} Therefore, CASA might not satisfy ex-post IR; hence $\k$-guaranteed revenue does not implie $\k$-guaranteed surplus. Of course, CASA still satisfies ex-ante IR (assuming bidders have correct Bayesian priors) since quitting at the beginning is always an option; hence, the ex-ante bounds we derive in the following sections on the revenue of CASA also apply to surplus.

\section{Rank-guarantee as a desideratum}

\label{sec:maxmin}

Section \ref{sec:CASA} shows that CASA achieves the rank-guarantee as long as bidders avoid strategies that are obviously dominated. In this section, we explore the concept of rank-guarantee as a desideratum in auction design.

Clearly, if the auctioneer knows that the bidders' valuations are independent and identically distributed, then rank-guarantee is an appealing approximation when \(N\) is large, as all order statistics converge to the upper bound of the valuation support (at the rate of $\frac{1}{N}$). In what follows, we show that even when the auctioneer has non-Bayesian uncertainty about the joint distribution of the bidders' valuations and maximizes the revenue-guarantee (the worst-case expected revenue where the worst case is taken over all joint distributions that are perceived to be plausible), in many settings, rank-guarantee remains an appealing approximation.

Let $\G \subset \Delta([\underline{v},\overbar{v}]^{2^{S}})$ be an arbitrary subset of distributions of valuation vector. We interpret $\G$ as the auctioneer's estimate of a \emph{representative bidder}'s valuation. Then, the joint distributions of the bidders' valuations that are considered possible by the auctioneer are
\begin{align*}
\F = \Big\{ F \in \Delta([\underline{v}, \overbar{v}]^{N \times 2^{S}}) \, \Big| \, \frac{1}{N} \sum F_n \in \G \Big\},
\end{align*}
where \(F_n\) is the marginal distribution of bidder \(n\)'s valuation. Thus, \(\frac{1}{N}\sum F_n\) is the cumulative distribution function of the valuation of a uniformly randomly selected bidder in the population. We call $\F$ an \emph{ambiguity set}. Such an ambiguity set $\F$ could come from the statistical estimation of $F$ based on a ``sanitized'' dataset about valuations, that is, past bidders' valuations with identity information removed. The ambiguity set $\F$ captures the type of distributional uncertainty introduced by \cite{gc17}, while further generalizing it to capture realistic knowledge structures stemming from statistical inference.\footnote{This setup covers a wide range of scenarios, as $\G$ is completely general. $\G$ could be a singleton set capturing the case in which the auctioneer has no uncertainty about the distribution of a representative bidder's valuation. $\G$ could also be the set of distributions satisfying certain statistical properties (say moment conditions), capturing scenarios in which the auctioneer also has some non-Bayesian uncertainty about the distribution of a representative bidder's valuation.}


For any \(F\in \Delta([\underline{v},\overbar{v}]^{N\times 2^{S}}) \) and menu $\M\subseteq 2^S$, define the ex-ante \emph{efficient} surplus
\begin{align*}
V_{\M}(F):=\E_F\left[\max_{\bm{b}\in\mathcal{B}(\M)} \, \max_{\iota:\bm{b}\rightarrow \mathcal{N}} \, \sum_{b\in \bm{b}} \, v^{\iota(b)}_b\right],
\end{align*}
where $\iota$, the assignment function satisfies $\iota(b) \neq \iota(b')$ for any $b \neq b'$. The ex-ante efficient surplus $V_{2^S}(F)$ with respect to the complete menu $2^S$ is denoted by $V^*(F)$ for simplicity.

\begin{thm}\label{thm:maxmin} For any menu $\M$,
\begin{align*}
\inf_{F\in \F} \, \E_F[{R}^k_{\M}(\bm{v})] \ge \inf_{F\in\F} \, V_{\M}(F) - \frac{(k-1)|\M|\overbar{v}}{N}.
\end{align*}
\end{thm}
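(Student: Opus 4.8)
The plan is to bound, for each fixed bundle $b$ that appears in some efficient allocation, the gap between the $k^{th}$-highest value $v_b^{(k)}$ and the value actually realized by the efficient assignment, and then to argue that in expectation this gap is small because of the structure of the ambiguity set $\F$. Concretely, fix $F \in \F$ and let $\bm{b}^*$ together with an assignment $\iota^*$ be a maximizer of the efficient-surplus program, so that $V_{\M}(F) = \E_F\big[\sum_{b \in \bm{b}^*} v_b^{\iota^*(b)}\big]$. Since $\bm{b}^* \in \mathcal{B}(\M)$ is in particular a feasible allocation, we have $R_{\M}^k(\bm{v}) \ge \sum_{b \in \bm{b}^*} v_b^{(k)}$ pointwise, so it suffices to show
\begin{align*}
\E_F\Big[\sum_{b\in\bm{b}^*}\big(v_b^{\iota^*(b)}-v_b^{(k)}\big)\Big] \le \frac{(k-1)|\M|\overbar{v}}{N}.
\end{align*}
The key step is to control $\E_F[v_b^{\iota^*(b)}-v_b^{(k)}]$ for a single bundle $b\in\M$. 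The point is that $v_b^{\iota^*(b)}$ is at most the highest value $v_b^{(1)}$, and $v_b^{(1)}-v_b^{(k)}$ is nonzero only in the event that the top value for $b$ strictly exceeds the $k^{th}$ one; bounding this difference by $\overbar{v}$ times the probability of some ``unbalanced'' event, and summing over the at most $|\M|$ bundles in $\bm{b}^*$, should produce the $|\M|\overbar{v}$ factor, while the $\frac{k-1}{N}$ factor must come from the constraint $\tfrac1N\sum_n F_n \in \G$.

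The mechanism for extracting $\frac{k-1}{N}$ is the main obstacle, and here is how I would handle it. For a fixed bundle $b$, let $G_b = \tfrac1N\sum_n F_{n,b}$ be the population marginal of the $b$-coordinate. For any threshold $x$, the number of bidders with $v_b^n \ge x$ has expectation $\sum_n \Pr(v_b^n \ge x) = N(1-G_b(x^-))$ (up to endpoint care). The event $\{v_b^{(k)} < x\}$ means fewer than $k$ bidders clear the bar $x$; by a Markov/first-moment bound, $\Pr(v_b^{(k)} < x) \le \Pr(\#\{n: v_b^n\ge x\} \le k-1)$, and one can integrate the tail identity
\begin{align*}
\E_F\big[v_b^{(1)}-v_b^{(k)}\big] = \int \Pr\big(v_b^{(k)} < x \le v_b^{(1)}\big)\,\d x \le \int \Pr\big(v_b^{(k)} < x,\ v_b^{(1)}\ge x\big)\,\d x,
\end{align*}
and on the event $\{v_b^{(1)}\ge x\}$ at least one bidder clears $x$; combining with a union/averaging argument over which $k-1$ bidders could be the only ones clearing $x$ should give a per-bundle bound of order $\frac{k-1}{N}\overbar{v}$, \emph{uniformly in $F\in\F$} since it only uses the population marginal. (An alternative, cleaner route: condition on the realized multiset of values for $b$ across the population; for any fixed multiset, relabeling bidders uniformly at random makes the assignment $\iota^*(b)$ — which depends only on relative order within the efficient program — behave like a uniformly random index among those tied for relevant ranks, and the expected shortfall below the $k^{th}$ order statistic of a size-$N$ sample is at most $\frac{k-1}{N}$ times the range; then take expectations and use $F\in\F$.) I would carry out whichever of these two arguments is less fiddly, then sum over $b\in\bm{b}^*$ using $|\bm{b}^*|\le|\M|$, take the infimum over $F\in\F$ on both sides, and note that $\inf_F \E_F[R_{\M}^k] \ge \inf_F\big(V_{\M}(F) - \tfrac{(k-1)|\M|\overbar{v}}{N}\big) = \inf_F V_{\M}(F) - \tfrac{(k-1)|\M|\overbar{v}}{N}$, which is the claim. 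The delicate points to get right are the endpoint/ties conventions in the tail integral and making sure the ``unbalancedness'' probability is expressed purely through the population average $\tfrac1N\sum_n F_n$ so that the bound is genuinely uniform over the ambiguity set.
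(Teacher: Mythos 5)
There is a genuine gap: your reduction to the claim that, for every fixed $F\in\F$, $\E_F\big[\sum_{b\in\bm{b}^*}(v_b^{\iota^*(b)}-v_b^{(k)})\big]\le \frac{(k-1)|\M|\overbar{v}}{N}$ is a reduction to a false statement, and both of your proposed routes to the per-bundle bound fail for the same reason. The constraint $\frac1N\sum_n F_n\in\G$ only pins down the population-average marginal; it does not prevent the adversary from concentrating the high values on a few identifiable bidders. Concretely, take a single relevant bundle $b$, let $G$ put mass $\frac1N$ on $\overbar{v}$ and mass $\frac{N-1}{N}$ on $0$, and let $F$ give bidder $1$ the value $\overbar{v}$ for $b$ with probability one and every other bidder the value $0$. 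Then $\frac1N\sum_n F_n=G$, so $F\in\F$, yet $v_b^{(1)}-v_b^{(k)}=\overbar{v}$ almost surely for every $k\ge2$, and the efficient assignment gives $b$ to bidder $1$, so $\E_F[v_b^{\iota^*(b)}-v_b^{(k)}]=\overbar{v}$, not $O\big(\tfrac{k-1}{N}\big)\overbar{v}$. Equivalently, $\E_F[R^k_{\M}(\bm{v})]=0$ while $V_{\M}(F)=\overbar{v}$: the pointwise-in-$F$ inequality you are trying to prove is simply false, and the theorem only compares two separate infima, which need not be attained at the same $F$. Your ``relabeling'' variant fails because the efficient program selects bidders by value, not by label: $\iota^*(b)$ is essentially an argmax, so relabeling does not turn it into a uniformly random index, and your tail-integral route fails because $\Pr\big(v_b^{(k)}<x\le v_b^{(1)}\big)$ can equal $1$ on the whole range under an $F\in\F$ as above.

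The missing idea is that one should not benchmark $R^k_{\M}$ against the efficient assignment for the same $F$. The paper instead compares $R^k_{\M}(\bm{v})$, realization by realization, with the surplus a single uniformly random bidder $\n$ would generate, $\max_{\bm{b}\in\mathcal{B}(\M)}\sum_{b\in\bm{b}}v_b^{\n}$. For a uniformly drawn bidder the needed probability bound holds for every $F$ by pure counting: at most $k-1$ bidders strictly exceed $v_b^{(k)}$, so $\Pr\big(v_b^{\n}>v_b^{(k)}\big)\le\frac{k-1}{N}$, giving $\E_F[R^k_{\M}(\bm{v})]\ge \E_F\big[\max_{\bm{b}}\sum_{b\in\bm{b}}v_b^{\n}\big]-\frac{(k-1)|\M|\overbar{v}}{N}$. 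The structure of $\F$ is then used in the opposite direction from yours: $\E_F\big[\max_{\bm{b}}\sum_{b\in\bm{b}}v_b^{\n}\big]$ depends on $F$ only through $\frac1N\sum_n F_n\in\G$, and for each $G\in\G$ the comonotone joint distribution in which all bidders are identical copies drawn from $G$ lies in $\F$ and has $V_{\M}$ equal to exactly this single-bidder surplus; hence $\inf_{F\in\F}\E_F\big[\max_{\bm{b}}\sum_{b\in\bm{b}}v_b^{\n}\big]=\inf_{G\in\G}\E_G\big[\max_{\bm{b}}\sum_{b\in\bm{b}}v_b\big]\ge\inf_{F\in\F}V_{\M}(F)$. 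In your counterexample this is exactly what rescues the theorem: the identical-bidder distribution drives $\inf_{F\in\F}V_{\M}(F)$ down to $\overbar{v}/N$, even though your pointwise comparison fails. Any correct proof must exploit that the right-hand infimum ranges over all of $\F$ (in particular over comonotone distributions), rather than arguing $F$ by $F$.
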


\begin{proof}

Let \(\n\) be a uniform random element of \(\mathcal{N}\).
\begin{align*}
{R}^k_{\M}(\bm{v}) = 
& \max_{\bm{b} \in \mathcal{B}(\M)} \, \sum_{b\in \bm{b}} \, v^{(k)}_b \\
\ge & \max_{\bm{b} \in \mathcal{B}(\M)} \, \Big( \sum_{b\in \bm{b}} \, v^{\n}_b - \sum_{b \in \bm{b}} \, \max \, \{v^{\n}_b-v^{(k)}_b, 0\} \Big) \\
\ge & \max_{\bm{b} \in \mathcal{B}(\M)} \, \Big( \sum_{b\in \bm{b}} \, v^{\n}_b\Big) - \sum_{b \in \M} \, \max \, \{v^{\n}_b-v^{(k)}_b, 0\} \\
\implies \E_F[{R}^k_{\M}(\bm{v})] \ge & \E_F\left[\max_{\bm{b} \in \mathcal{B}(\M)} \, \sum_{b\in \bm{b}} \, v^{\n}_b\right] - \sum_{b \in \M} \, \E_F\left[ v^{\n}_b-v^{(k)}_b | v^{\n}_b> v^{(k)}_b \right] \text{ Prob}( v^{\n}_b> v^{(k)}_b) \\
\ge & \E_F\left[\max_{\bm{b}\in \mathcal{B}(\M)} \, \sum_{b\in \bm{b}} \, v^{\n}_b\right] - \sum_{b\in \M} \, \overbar{v} \text{ Prob}( v^{\n}_b> v^{(k)}_b)\\
\ge & \E_F\left[\max_{\bm{b}\in \mathcal{B}(\M)} \, \sum_{b\in \bm{b}} \, v^{\n}_b\right] - \frac{(k-1)|\M|\overbar{v}}{N}.
\end{align*}

This further implies that
\begin{align*}
\inf_{F\in\F} \, \E_F\left[ {R}^k_{\M}(\bm{v})\right]\ge& \inf_{F\in \F} \, \E_F\left[\max_{\bm{b}\in \mathcal{B}(\M)} \, \sum_{b\in \bm{b}} \,v^{\n}_b\right]-\frac{(k-1)|\M|\overbar{v}}{N}\\
= & \inf_{G\in\G} \, \E_G\left[\max_{\bm{b}\in \mathcal{B}(\M)} \, \sum_{b\in\bm{b}} \, v_b\right]-\frac{(k-1)|\M|\overbar{v}}{N}\\
\ge & \inf_{F\in\F} \, V_{\M}(F)-\frac{(k-1)|\M| \overbar{v}}{N},
\end{align*}
where the equality follows from the definition of the set $\F$. For the last inequality, observe that for any $G \in \G$, the joint distribution where each bidder's value is distributed according to $G$ and all bidders' values are maximally positively correlated is contained in $\F$. Therefore,
$$\inf_{F\in\F} \, V_{\M}(F)\le \inf_{G\in\G}\E_G\left[\max_{\bm{b}\in \mathcal{B}(\M)} \, \sum_{b\in\bm{b}} \, v_b\right].$$
\end{proof}


\cref{thm:maxmin} highlights a key trade-off between \emph{menu sufficiency} and \emph{approximation efficiency}. Evidently, presenting the bidders with a larger menu has the potentially of increasing allocation efficiency. Particularly, \(\M\) can be naively chosen to be the complete menu \(2^S\) to guarantee full menu sufficiency, i.e., $V_{\M}(F)=V^*(F)$. However, this leads to \(|\M|\) growing exponentially in \(M\), causing both complex auction process and slow convergence. On the other hand, choosing a small menu achieves approximation efficiency but sacrifices allocation efficiency. Although such trade-off is generally non-trivial under general combinatorial preferences, we show in the next section that under canonical preference structures, menu sufficiency and approximation efficiency can often be achieved simultaneously. 

\subsection{Discussions}

\paragraph{Alternative ambiguity sets:} The proof of \cref{thm:maxmin} goes through for a general $\F$ if the following equality holds.
\begin{align*}
    \inf_{F\in \F} \, \E_F\left[\max_{\bm{b}\in \mathcal{B}(\M)} \, \sum_{b\in \bm{b}} \,v^{\n}_b\right]= \inf_{F\in\F} \, V_{\M}(F).
\end{align*}
That is, in the worst case, randomly selecting bidders performs as well as optimally selecting bidders. Therefore, it is straightforward that \cref{thm:maxmin} holds under an alternative ambiguity set:
\begin{align*}
    \widehat{\F}=\left\{ F \in \Delta([\underline{v}, \overbar{v}]^{N \times 2^{S}}) \, \Big| \forall n,\  F_n \in \G\right\}.
\end{align*}
Under $\widehat{\F}$, the auctioneer knows that the marginal distributions of bidders are contained in $\G$, but nothing beyond that. This ambiguity set exactly captures the correlational uncertainty studied in \cite{gc17}, \cite{hl21}, \cite{zhang2022correlationrobust}, and \cite{suzdaltsev2022distributionally}.

\paragraph{Distributions with unbounded support:} Suppose the support of distributions in $\G$ has unbounded support, i.e., $\overbar{v}=\infty$, then the bound derived in \cref{thm:maxmin} has no bite. However, observe that
\begin{align*}
    \E_F\left[ v^{\n}_b | v^{\n}_b> v^{(k)}_b \right] \text{ Prob}( v^{\n}_b> v^{(k)}_b)\le \int_{Q_{\bar{F}_b}^{(k-1)/N}}^{\infty} v_b \d \bar{F}(v),
\end{align*}
where $\bar{F} = \frac{\sum F_n}{N}$ and $Q_{\bar{F}_b}^{(k-1)/N}$ is the top $\frac{k-1}{N}$ quantile of the marginal of $\bar{F}$ for $v_b$. To see why the inequality holds, the LHS is the constrained expectation of $v_b$ given $\bar{F}$ on \emph{some} event of probability at most $\frac{k-1}{N}$. The RHS is the constrained expectation of $v_b$ given $\bar{F}$ on the probability-$\frac{k-1}{N}$ event that maximizes its value. Therefore, 
\begin{align*}
    \inf_{F\in \F} \, \E_F[{R}^k_{\M}(\bm{v})] \ge \inf_{F\in\F} \, V_{\M}(F) - \sup_{G\in\G} \sum_{b\in\M}\int_{Q_{G_b}^{(k-1)/N}}^{\infty} v_b \d G(v).
\end{align*}
For any $G$, the concentration inequality implies $\int_{Q_{G_b}^{(k-1)/N}}^{\infty} v_b \d G(v)\le \frac{\E_{G}[v_b^2]}{Q_{G_b}^{(k-1)/N}}$. If $G$ has unbounded support, so is the quantile $Q_{G_b}^{(k-1)/N}$. As a result, when distributions in $\G$ have unbounded support but uniformly bounded second moment, the worst-case rank-guarantee converges to the full surplus at the rate of $O\Big(\frac{|\M|}{\inf_{G\in \G}Q_{G_b}^{(k-1)/N}}\Big)$. 

\paragraph{Tightness of the bound:} The coefficient $k|\M|$ in \cref{thm:maxmin} consists of two parts. The coefficient $k$ comes from the $\k$ highest value approximation. The coefficient $|\M|$ comes from the total number of bundles in the menu $\M$. \cref{prop:tight} below shows that the dependence on $k$ is tight.

\begin{prop}\label{prop:tight} For any $ M,N,\M,k$, there exists some $\G $ such that
\begin{align*}
\inf_{F \in \F} \, \E_F[{R}^k_{\M}(\bm{v})] \le \inf_{F\in\F} \, V_{\M}(F) - \textstyle O \left(\frac{k}{N}\right).
\end{align*}
\end{prop}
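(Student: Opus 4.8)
The plan is to exhibit an explicit worst case — a \emph{singleton} ambiguity primitive $\G=\{G\}$ — under which the gap between $\inf_{F\in\F}V_{\M}(F)$ and $\inf_{F\in\F}\E_F[R^{k}_{\M}(\bm{v})]$ is of order $k/N$, so that the $(k-1)$ factor in \cref{thm:maxmin} cannot be improved. I would take $G$ to be a two-point distribution over a ``high'' type and a ``low'' type that is \emph{constant across bundles}: with probability $q:=(k-1)/N$ a representative bidder values \emph{every} bundle at $\overbar{v}$, and with probability $1-q$ values every bundle at $\underline{v}$ (since $k\le N$ we have $q<1$, so $G$ is well-defined). Making the randomness common to all bundles renders the combinatorial structure of $\M$ essentially irrelevant: writing $m^{*}:=\max_{\bm{b}\in\mathcal{B}(\M)}|\bm{b}|$ for the size of a largest feasible allocation, if all $N$ bidders shared a common value $x$ then every feasible allocation would be worth $m^{*}x$, and with heterogeneous high/low types a largest feasible allocation is still worth just a sum of $m^{*}$ bidders' values.

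With $\G=\{G\}$ the ambiguity set is $\F=\{F:\tfrac{1}{N}\sum_n F_n=G\}$, and the one structural fact I would record is that every $F\in\F$ satisfies $\sum_{n}\mathrm{Prob}_{F_n}(\text{high})=Nq=k-1$. I would then bound the two quantities separately. For the rank guarantee, take the coupling $F^{*}\in\F$ under which $k-1$ fixed bidders are deterministically ``high'' and the remaining $N-k+1$ are deterministically ``low''; then for every bundle exactly $k-1$ bidders hold value $\overbar{v}$, so the $\k$-highest value of every bundle equals $\underline{v}$ and $R^{k}_{\M}(\bm{v})=m^{*}\underline{v}$ deterministically, whence $\inf_{F\in\F}\E_F[R^{k}_{\M}(\bm{v})]\le m^{*}\underline{v}$. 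For the efficient surplus, fix any largest feasible allocation $\bm{b}^{*}$ (of size $m^{*}\le|\M|<N$) and, via a fixed injection, assign its $m^{*}$ bundles to the $m^{*}$ bidders with the largest mean values $\mu_n:=\E_{F_n}[\text{value}]$; since these are the top $m^{*}$ of $N$ numbers whose average is $\mu_G:=q\overbar{v}+(1-q)\underline{v}$, their sum is at least $m^{*}\mu_G$, and hence $V_{\M}(F)\ge m^{*}\mu_G$ for every $F\in\F$. Combining the two estimates, $\inf_{F\in\F}\E_F[R^{k}_{\M}(\bm{v})]\le m^{*}\underline{v}=m^{*}\mu_G-m^{*}q(\overbar{v}-\underline{v})\le\inf_{F\in\F}V_{\M}(F)-\tfrac{m^{*}(k-1)}{N}(\overbar{v}-\underline{v})$; since $m^{*}\ge1$ and $\overbar{v}>\underline{v}$, the gap is at least $\tfrac{(k-1)(\overbar{v}-\underline{v})}{N}=\Omega(k/N)$ (for $k\ge2$; for $k=1$ the statement is vacuous, since the $1$-guarantee equals the full surplus).

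I do not anticipate a real obstacle — the construction is engineered so that everything collapses to a one-line computation — but the step that needs a moment's care is the lower bound $\inf_{F\in\F}V_{\M}(F)\ge m^{*}\mu_G$: one has to be certain that no coupling consistent with $G$ can push the efficient surplus below $m^{*}\mu_G$, which is exactly the ``the top $m^{*}$ of $N$ numbers average at least the overall average'' observation, attained by the comonotone coupling in which every bidder is one and the same Bernoulli$(q)$ type. Finally, it is worth remarking that when $\underline{v}=0$ and $\M$ consists of pairwise-disjoint bundles one has $m^{*}=|\M|$, so this same example actually attains the full coefficient $(k-1)|\M|\overbar{v}/N$ of \cref{thm:maxmin}; the bound there is thus tight in $k$ and $|\M|$ simultaneously, even though the proposition only claims tightness in $k$.
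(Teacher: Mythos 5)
Your proof is correct, and its core mechanism is the same as the paper's: fix a singleton $\G$ and exhibit a coupling $F^*\in\F$ that concentrates the top $\frac{k-1}{N}$ mass of $G$ on $k-1$ identical bidders, so the $\k$-highest value collapses to the low part of the support while the efficient surplus retains the population mean. The paper instantiates this with a single valuable bundle valued $U[0,1]$ (high group $U[1-\frac{k-1}{N},1]$, low group $U[0,1-\frac{k-1}{N}]$), giving a gap of $\frac{k-1}{2N}$; you instead use a two-point distribution constant across bundles, and lower-bound $V_{\M}(F)$ by assigning a largest feasible allocation (of size $m^*$) to the $m^*$ highest-mean bidders, which is valid since $m^*\le|\M|<N$ and every $F\in\F$ is supported on the two constant valuation vectors. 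Your variant buys a bit more: the gap scales as $m^*(k-1)(\overbar{v}-\underline{v})/N$, so for a menu of pairwise disjoint bundles (where $m^*=|\M|$) the example attains the full coefficient of \cref{thm:maxmin}, which speaks to the paper's open remark about the $|\M|$ factor --- your construction shows that factor cannot be improved below the maximal allocation size $m^*$, while leaving open whether $|\M|$ can be replaced by $m^*$ for general menus. Two small points: your parenthetical that the $1^{st}$-guarantee ``equals'' the full surplus should read ``is at least'' (in $R^1_{\M}$ the same bidder's value may be counted for several bundles), and the argument needs $\overbar{v}>\underline{v}$ and $2\le k\le N$, both of which you correctly flag.
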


\begin{proof}
See \cref{proof:tight}.
\end{proof}

The dependence of the approximation gap on $|\M|$, however, might not be tight as we take a very loose upper bound in the proof of \cref{thm:maxmin}: we bound the revenue loss from the allocated bundles (up to $M$ of them) by the revenue loss from all bundles ($|\M|$ of them). While we speculate that the coefficient $|\M|$ can be improved, a formal proof is yet unknown to us.

\section{Menu design and simple menus}

\label{sec:sufficient}

In this section, we examine several canonical classes of preference structures where there exist menus that are both \emph{sufficient}, ensuring full allocation efficiency, and \emph{simple}, with menu size growing at a polynomial rate as $M$ increases.

\begin{defi}\label{def:sufficiency}
	
Menu $\M$ is $\G$-sufficient if:
\begin{align*}
\inf_{G\in\G} \, \E_G\left[\max_{\bm{b}\in \mathcal{B}(\M)} \, \sum_{b\in\bm{b}} \, v_b\right] = \inf_{G\in\G} \, \E_G\left[\max_{\bm{b}\in \mathcal{B}(2^S)} \, \sum_{b\in\bm{b}} \, v_b\right].
\end{align*}

\end{defi}

In words, a menu $\M$ is $\G$-sufficient if the worst-case surplus from allocating to (hypothetically) identical bidders with valuation distribution from $\G$ is the same as that under the complete menu $2^S$. Importantly, $\G$-sufficiency is defined with respect to the preference of a single bidder instead of all bidders. It is much weaker than assuming that restricting to allocations within $\M$ is without loss for ex-post efficiency.\footnote{Consider for instance two items and two bidders, where for each bidder the sum of the value for each individual item is more than her value for the grand bundle. Then, menu of individual items is ``sufficient'' per \cref{def:sufficiency}, but not necessarily ex-post efficient when the two bidder's values are highly asymmetric.} Nevertheless, sufficiency guarantees full allocation efficiency:

\begin{thm} \label{thm:sufficient}
	
If menu $\M$ is  $\G$-sufficient, then
\begin{align*}
\inf_{F\in\F} \, \E_F\left[ {R}^k_{\M}(\bm{v})\right] \ge \inf_{F\in\F} \, V^*(F) - \frac{(k-1)|\M| \overbar{v}}{N}.
\end{align*}

\end{thm}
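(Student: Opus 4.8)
The plan is to combine \cref{thm:maxmin} with the definition of $\G$-sufficiency, so that the real content is just unwinding the two chains of inequalities already available to us. First I would apply \cref{thm:maxmin} verbatim to the given menu $\M$, obtaining
\begin{align*}
\inf_{F\in \F} \, \E_F[{R}^k_{\M}(\bm{v})] \ge \inf_{F\in\F} \, V_{\M}(F) - \frac{(k-1)|\M|\overbar{v}}{N}.
\end{align*}
So it suffices to show $\inf_{F\in\F} V_{\M}(F) = \inf_{F\in\F} V^*(F)$, or at least $\inf_{F\in\F} V_{\M}(F) \ge \inf_{F\in\F} V^*(F)$ (the reverse is immediate since $\mathcal{B}(\M)\subseteq\mathcal{B}(2^S)$, so $V_{\M}(F)\le V^*(F)$ pointwise).

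Now recall from inside the proof of \cref{thm:maxmin} the identity, valid for any menu,
\begin{align*}
\inf_{F\in\F} \, \E_F\left[\max_{\bm{b}\in \mathcal{B}(\M)} \, \sum_{b\in \bm{b}} \,v^{\n}_b\right] = \inf_{G\in\G} \, \E_G\left[\max_{\bm{b}\in \mathcal{B}(\M)} \, \sum_{b\in\bm{b}} \, v_b\right] = \inf_{F\in\F} \, V_{\M}(F),
\end{align*}
where the first equality is the definition of $\F$ and the second is the ``random selection performs as well as optimal selection in the worst case'' argument from that proof (the $\le$ direction uses that maximal positive correlation of a single $G$ lies in $\F$, and the $\ge$ direction is trivial). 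Apply this once with $\M$ and once with the complete menu $2^S$. Then $\G$-sufficiency (\cref{def:sufficiency}) is precisely the statement that the two middle quantities $\inf_{G\in\G} \E_G[\max_{\bm{b}\in\mathcal{B}(\M)}\sum_{b\in\bm{b}}v_b]$ and $\inf_{G\in\G} \E_G[\max_{\bm{b}\in\mathcal{B}(2^S)}\sum_{b\in\bm{b}}v_b]$ coincide, which forces $\inf_{F\in\F} V_{\M}(F) = \inf_{F\in\F} V_{2^S}(F) = \inf_{F\in\F} V^*(F)$. Substituting this into the displayed bound from \cref{thm:maxmin} yields exactly the claimed inequality.

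There is essentially no obstacle here beyond bookkeeping: the theorem is a corollary obtained by chaining \cref{thm:maxmin} with \cref{def:sufficiency}. The one point that deserves a word of care is that $\G$-sufficiency is phrased in terms of the single-bidder quantities $\inf_{G\in\G}\E_G[\cdots]$ rather than the multi-bidder $\inf_{F\in\F} V_{\M}(F)$, so I would make the translation between these two explicit by invoking the two equalities above (both of which are lifted directly from the proof of \cref{thm:maxmin}), rather than leaving the reader to reconstruct it. If one prefers to avoid re-deriving the intermediate identity, an alternative is to note that the proof of \cref{thm:maxmin} actually establishes the bound with $\inf_{G\in\G}\E_G[\max_{\bm{b}\in\mathcal{B}(\M)}\sum_{b\in\bm{b}}v_b]$ in place of $\inf_{F\in\F}V_{\M}(F)$ on the right-hand side already, and then $\G$-sufficiency rewrites that term as $\inf_{G\in\G}\E_G[\max_{\bm{b}\in\mathcal{B}(2^S)}\sum_{b\in\bm{b}}v_b]$, which is $\ge \inf_{F\in\F}V^*(F)$ by the same maximal-correlation observation applied to the complete menu; either route is a few lines.
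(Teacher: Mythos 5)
Your primary route has a genuine gap at exactly the step you wave off as ``trivial''. The identity $\inf_{F\in\F}V_{\M}(F)=\inf_{G\in\G}\E_G\bigl[\max_{\bm{b}\in\mathcal{B}(\M)}\sum_{b\in\bm{b}}v_b\bigr]$ is not established in the paper and is false in general: the maximal-positive-correlation argument only gives $\inf_{F\in\F}V_{\M}(F)\le\inf_{G\in\G}\E_G[\cdots]$, and that is the direction pointing the wrong way for your chain, which needs $\inf_{F\in\F}V_{\M}(F)\ge\inf_{G\in\G}\E_G[\cdots]$. The reason the missing direction is not trivial (indeed not true) is that $\max_{\bm{b}}\sum_{b\in\bm{b}}v^{\n}_b$ credits every bundle in the collection additively to a single bidder, which is not a feasible assignment in $V_{\M}$, where distinct bundles must go to distinct bidders; so $V_{\M}(F)$ need not dominate $\E_F\bigl[\max_{\bm{b}}\sum_{b\in\bm{b}}v^{\n}_b\bigr]$. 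Concretely, take $S=\{1,2\}$, $\M=\{\{1\},\{2\},\{1,2\}\}$, $N=4$, and two types: type $B$ values each singleton and the grand bundle at $1$, type $A$ values the grand bundle at $1$ and the singletons at $0$; let $\G=\{G\}$ with $G=\tfrac14\delta_B+\tfrac34\delta_A$. Then $\inf_{G\in\G}\E_G\bigl[\max_{\bm{b}\in\mathcal{B}(\M)}\sum_{b\in\bm{b}}v_b\bigr]=\tfrac14\cdot 2+\tfrac34\cdot 1=\tfrac54$, while the deterministic profile $(B,A,A,A)$ has marginals averaging to $G$, hence lies in $\F$, and gives $V_{\M}=1$ (either the grand bundle is allocated, worth $1$, or the two singletons go to two distinct bidders, at most one of whom is type $B$). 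So $\inf_{F\in\F}V_{\M}(F)\le 1<\tfrac54$, the claimed equality fails, and your reduction ``it suffices to show $\inf_F V_{\M}=\inf_F V^*$'' is not delivered by the argument you give; whether $\G$-sufficiency alone implies that equality of infima is a separate, unproven claim.

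Your closing ``alternative'' route, by contrast, is correct and is in fact the paper's own proof: the proof of \cref{thm:maxmin} already establishes $\inf_{F\in\F}\E_F[{R}^k_{\M}(\bm{v})]\ge\inf_{G\in\G}\E_G\bigl[\max_{\bm{b}\in\mathcal{B}(\M)}\sum_{b\in\bm{b}}v_b\bigr]-\frac{(k-1)|\M|\overbar{v}}{N}$ before its final step; $\G$-sufficiency (\cref{def:sufficiency}) replaces $\mathcal{B}(\M)$ by $\mathcal{B}(2^S)$ inside that single-bidder expression, and the maximal-correlation observation applied to the complete menu gives $\inf_{G\in\G}\E_G\bigl[\max_{\bm{b}\in\mathcal{B}(2^S)}\sum_{b\in\bm{b}}v_b\bigr]\ge\inf_{F\in\F}V^*(F)$. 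Promote that to your main argument and never descend to $\inf_F V_{\M}(F)$ at all.
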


\begin{proof}
\begin{align*}
\inf_{F\in\F} \, \E_F\left[ {R}^k_{\M}(\bm{v})\right] \ge& \inf_{G\in\G} \, \E_G\left[\max_{\bm{b}\in \mathcal{B}(\M)} \, \sum_{b\in\bm{b}} \, v_b\right]-\frac{(k-1)|\M|\overbar{v}}{N}\\
= & \inf_{G\in\G} \, \E_G\left[\max_{\bm{b}\in \mathcal{B}(2^S)} \, \sum_{b\in\bm{b}} \, v_b\right] - \frac{(k-1)|\M|\overbar{v}}{N} \\
\ge & \inf_{F\in\F} \, V^*(F)-\frac{(k-1)|\M| \overbar{v}}{N},
\end{align*}
where the equality follows from the $\G$-sufficiency of menu \(\M\), and the two inequalities have been established in the proof of Theorem \ref{thm:maxmin}.
\end{proof}

A simple sufficient condition for the $\G$-sufficiency of menu $\M$ is that
\begin{align}
    \max_{\bm{b} \in \mathcal{B}(2^S)} \, \sum_{b\in\bm{b}} \, v_b = \max_{\bm{b}\in \mathcal{B}(\M)} \, \sum_{b\in\bm{b}} \, v_b \label{eq:sufficient}
\end{align}
 holds ex-post, i.e. $\forall \bm{v}\in \mathrm{Supp}(\G):=\cup_{G\in\G} \, \mathrm{Supp}(G)$. This condition allows us to convert combinatorial preferences into sufficiency. \Cref{thm:sufficient}, as well as \cref{eq:sufficient}, states that one only needs to verify the sufficiency of a menu based on individual bidder's preference, as opposed to the distribution of valuations among all bidders. This is a consequence of the robustness concern. Recall from our analysis in \cref{sec:maxmin} that the adversarial nature minimizes the rank-guarantee by making all losing bidders identical. Then, in the worst-case, a $\G$-sufficient menu performs as good as the complete menu. In the alternative cases where losing bidders are asymmetric, even though the $\G$-sufficient menu under-performs the complete menu, the extra surplus from asymmetry leads to an even higher rank-guarantee. With \cref{thm:sufficient} and \cref{eq:sufficient}, we derive simple sufficient menus for several canonical preference structures.

\paragraph*{Weak substitutability and itemized ascending auction}

\begin{defi}

We say that bidder preferences exhibit \textbf{weak substitutability} if for any \(\bm{v}\in \mathrm{Supp}(\G)\) and \(b \subseteq S\),
\begin{align*}
\sum_{s\in b} v_{\{s\}}\ge v_b.
\end{align*}

\end{defi}

In words, a representative bidder finds the value of any bundle weakly lower than the sum of her value for each item in the bundle. Weak substitutability is a necessary condition for various substitutability notions studied in the literature. 

\begin{prop}
 If bidder preferences exhibit weak substitutability, then the menu \(\M=S\) is \(\G\)-sufficient and \(k=M+1\).
\end{prop}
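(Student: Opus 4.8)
The plan is to verify the two conditions required by the preceding results: first, that the menu $\M = S$ (the collection of singletons $\{\{s\}\}_{s \in S}$) is $\G$-sufficient in the sense of \cref{def:sufficiency}, and second, to read off the rank $k$ from the menu size. Since $|\M| = |S| = M$, the claim $k = |\M| + 1 = M + 1$ is immediate once sufficiency is established, so the whole content of the proof is the sufficiency claim. To prove that, I would invoke the ex-post sufficient condition \eqref{eq:sufficient}: it suffices to show that for every $\bm{v} \in \mathrm{Supp}(\G)$,
\[
\max_{\bm{b} \in \mathcal{B}(2^S)} \, \sum_{b \in \bm{b}} v_b \;=\; \max_{\bm{b} \in \mathcal{B}(S)} \, \sum_{b \in \bm{b}} v_b.
\]

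The inequality $\ge$ is trivial because $\mathcal{B}(S) \subseteq \mathcal{B}(2^S)$. For $\le$, take any feasible allocation $\bm{b} \in \mathcal{B}(2^S)$, i.e. a collection of pairwise disjoint bundles. Replace each bundle $b \in \bm{b}$ by the collection of its constituent singletons $\{\{s\} : s \in b\}$; since the bundles in $\bm{b}$ were already pairwise disjoint, the resulting collection of singletons is still pairwise disjoint, hence lies in $\mathcal{B}(S)$. By weak substitutability, $\sum_{s \in b} v_{\{s\}} \ge v_b$ for each $b$, so summing over $b \in \bm{b}$ shows the singleton allocation yields a total value at least $\sum_{b \in \bm{b}} v_b$. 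Taking the maximum over $\bm{b} \in \mathcal{B}(2^S)$ on the right and noting the left side is bounded by $\max_{\mathcal{B}(S)}$ gives the reverse inequality, establishing \eqref{eq:sufficient} and hence $\G$-sufficiency of $\M = S$.

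Finally, combining $\G$-sufficiency with \cref{thm:sufficient} (and noting $|\M| = M$, $k = M+1$) yields the stated revenue guarantee; but since the proposition as phrased only asserts sufficiency and the value of $k$, the argument above is complete. I do not anticipate a genuine obstacle here: the only subtlety is the bookkeeping observation that splitting disjoint bundles into singletons preserves disjointness, which is immediate. One should perhaps remark that the argument does not require $\bm{b}$ to be a partition of $S$ — unsold items are simply left out on both sides — so the empty-bundle normalization $v_\emptyset = 0$ plays no role.
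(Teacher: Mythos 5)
Your argument is correct and is exactly the route the paper intends: the paper states this proposition without a separate proof precisely because it follows immediately from the ex-post sufficient condition \eqref{eq:sufficient}, which you verify by splitting any disjoint collection of bundles into singletons (disjointness preserved) and applying weak substitutability, with $k=|\M|+1=M+1$ read off from the menu size.
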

When \(\M=S\), CASA reduces to a simple itemized ascending auction, where the allocation is determined jointly in the end. Weak substitutability is one of the most widely studied preference assumptions in the literature as it captures a natural diminishing return to scale. Our analysis shows that under such preference structures, CASA exhibits extreme simplicity while achieving both allocation and approximation efficiency. Intriguingly, under weak substitutability, the canonical Vickery auction performs as well as CASA, despite its much worse performance under more general preference structures.

\begin{prop}\label{prop:vcg}
	
 If bidder preferences exhibit weak substitutability, then the Vickery auction achieves a revenue guarantee of ${R}^{M+1}_{S} (\bm{v})$. 
 
\end{prop}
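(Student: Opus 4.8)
The plan is to show that under weak substitutability, the VCG outcome allocates each item $\{s\}$ separately (i.e., the efficient allocation among the bidders uses only singletons), and that on this "itemized" allocation the VCG payment for item $\{s\}$ equals the second-highest value $v^{(2)}_{\{s\}}$ for that single item. Then I would argue that summing these second-prices dominates ${R}^{M+1}_{S}(\bm{v})$, which by definition is $\max_{\bm{b}\in\mathcal{B}(S)}\sum_{b\in\bm{b}}v^{(M+1)}_b$ but here is just $\sum_{s\in S} v^{(M+1)}_{\{s\}}$ (the full menu $S$ allows taking all singletons, and this is clearly the revenue-maximizing feasible allocation when prices are nonnegative order statistics). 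Since $N\ge|\M|+1=M+1$ we have $v^{(2)}_{\{s\}}\ge v^{(M+1)}_{\{s\}}$ for every $s$, giving the desired inequality term by term.

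First I would make precise the reduction to singletons: given any feasible allocation $(\bm b,\iota)$ with $\bm b\in\mathcal{B}(2^S)$ and assignment $\iota$, weak substitutability lets me replace each assigned bundle $b=\iota^{-1}(n)$ by the singletons $\{s\}$ for $s\in b$, assigning each to $n$; this weakly increases total surplus since $\sum_{s\in b}v^n_{\{s\}}\ge v^n_b$, and the resulting allocation is still feasible (disjoint singletons, each to one bidder, using $N\ge M$ bidders). Hence the efficient surplus equals $\sum_{s\in S}\max_n v^n_{\{s\}}=\sum_{s\in S}v^{(1)}_{\{s\}}$, attained by giving item $s$ to a highest-value bidder for $s$. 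Then I would compute the VCG (Clarke pivot) payment of the winner $n_s$ of item $s$: removing $n_s$, the efficient surplus of the others is still the sum of per-item maxima over the remaining bidders, so the externality $n_s$ imposes is exactly $v^{(1)}_{\{s\}}-v^{(2)}_{\{s\}}$ on item $s$ (and zero on the other items, whose winners and values are unaffected unless $n_s$ was also a runner-up elsewhere — but that only lowers others' counterfactual surplus further, i.e. does not change the per-item structure since we may always pick winners and runners-up to be distinct when ties permit, and generically they are). The payment of $n_s$ is therefore $v^{(2)}_{\{s\}}$, and total VCG revenue is $\sum_{s\in S}v^{(2)}_{\{s\}}$.

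Finally I would close by comparing with the rank guarantee: ${R}^{M+1}_{S}(\bm{v})=\max_{\bm{b}\in\mathcal{B}(S)}\sum_{b\in\bm{b}}v^{(M+1)}_b=\sum_{s\in S}v^{(M+1)}_{\{s\}}$ because all order statistics are nonnegative, so the maximizing disjoint collection is all of $S$. Since $M+1\ge 2$, monotonicity of order statistics gives $v^{(2)}_{\{s\}}\ge v^{(M+1)}_{\{s\}}$ for each $s$, hence $\sum_{s}v^{(2)}_{\{s\}}\ge\sum_{s}v^{(M+1)}_{\{s\}}={R}^{M+1}_{S}(\bm{v})$, as claimed. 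The main obstacle I anticipate is the bookkeeping around ties and the precise definition of the VCG payment when a winner of one item is simultaneously the runner-up for another: I need to argue carefully that the per-item decomposition of the Clarke pivot payments still holds (or at least that the total payment is no smaller than $\sum_s v^{(2)}_{\{s\}}$), which is where weak substitutability is used a second time to control the counterfactual surplus of the remaining bidders. A clean way to handle this is to lower-bound each winner's payment by the externality computed against a specific (not necessarily efficient) counterfactual allocation of the others, namely the itemized one, which immediately yields the $v^{(2)}_{\{s\}}$ bound without needing to resolve ties.
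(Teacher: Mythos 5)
Your opening step is where the argument breaks: the ``reduction to singletons'' is not valid in this model. In the efficient-surplus benchmark (and in the VCG allocation problem) each bundle is assigned to a \emph{distinct} bidder, and a bidder who ends up with the set of items $b$ contributes $v^n_b$, not $\sum_{s\in b}v^n_{\{s\}}$. Weak substitutability gives $\sum_{s\in b}v^n_{\{s\}}\ge v^n_b$, but you cannot realize that sum, because the extra singletons would have to go to \emph{other} bidders, whose values may be much lower. Consequently your two structural claims are false: (i) the efficient allocation need not be itemized --- take $M=2$, bidder $1$ with $v^1_{\{a\}}=v^1_{\{b\}}=10$, $v^1_{\{a,b\}}=15$ and two further bidders valuing every bundle at $1$ (weak substitutability holds); the efficient allocation gives the grand bundle to bidder $1$, not singletons, and the efficient surplus is $15\neq v^{(1)}_{\{a\}}+v^{(1)}_{\{b\}}=20$; (ii) VCG revenue is not $\sum_s v^{(2)}_{\{s\}}$ --- with $v^1_{\{a\}}=v^1_{\{b\}}=10$, $v^1_{\{a,b\}}=12$, $v^2_{b}=8$ for every nonempty $b$, and a third bidder valuing everything at $1$, the Clarke payments are $1$ and $2$ (revenue $3$), while $\sum_s v^{(2)}_{\{s\}}=16$. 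So the intermediate statement you rely on (``total VCG revenue is $\sum_s v^{(2)}_{\{s\}}$'') is simply not true under weak substitutability, even though the final inequality against $\sum_s v^{(M+1)}_{\{s\}}$ happens to survive in these examples. You also never explain why the $(M+1)^{th}$ order statistic, rather than the second, is the right target, which is the substantive content of the proposition.

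The paper's proof avoids the false itemization claim entirely. It writes the VCG revenue as the sum of Clarke pivot terms and lower-bounds each term $\sup_{\bm{b}\in\mathcal{B}_{N-1}}\sum_{n'\neq n}v^{n'}_{b_{n'}}-\sum_{n'\neq n}v^{n'}_{b^*_{n'}(\bm{v})}$ by plugging in one explicit feasible allocation $\bm{b}^n$ of the items to the bidders other than $n$: keep every other bidder's efficient bundle, and hand out the items in $b^*_n(\bm{v})$ one by one, as singletons, to bidders who receive nothing under $\bm{b}^*(\bm{v})$, each time to the highest remaining singleton value. Since at most $M$ bidders are excluded at any point of this construction, each reassigned item $o$ fetches at least $v^{(M+1)}_{\{o\}}$, and summing over bidders and items gives $R_{VCG}(\bm{v})\ge\sum_{o}v^{(M+1)}_{\{o\}}={R}^{M+1}_{S}(\bm{v})$. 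Your closing remark about bounding payments against a specific itemized counterfactual is in this spirit, but as stated it still presumes the efficient allocation is itemized and targets $v^{(2)}_{\{s\}}$; to repair the proof you would need to drop the itemization claim, work with the (possibly bundled) efficient allocation, and run exactly this kind of counting argument to reach the $(M+1)^{th}$-highest singleton values.
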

\begin{proof}
    See \cref{proof:vcg}.
\end{proof}
An even more special case of weak substitutability is the sponsored search auction, where valuations of items are constant (and common) ratios of a one-dimensional private type. As shown in \cite{edelman2007internet}, the clock auction version of \emph{generalized second price} (GSP) auction is outcome equivalent to the Vickery auction; hence achieving the same rank-guarantee. 

\paragraph*{Weak complementarity and the second-price auction}

\begin{defi}
  Bidder preferences exhibit \textbf{weak complementarity} if for any \(\bm{v}\in \mathrm{Supp}(\G)\) and \( \bm{b}\in \mathcal{B}(2^S)\),
  \begin{align*}
    \sum_{b\in \bm{b}} v_{b}\le v_S.
  \end{align*}
\end{defi}
In words, a representative bidder finds the value of the grand bundle weakly higher than the total value of any feasible collection of bundles. Weak complementarity is a necessary condition for various complementarity notions studied in the literature. 
\begin{prop}
  If bidder preferences exhibit weak substitutability, then the menu \(\M=\{S\}\) is \(\G\)-sufficient and \(k=2\).
 \end{prop}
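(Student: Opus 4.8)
The plan is to verify $\G$-sufficiency through the ex-post sufficient condition \cref{eq:sufficient} and then to read the rank directly off \cref{thm:1}, but a remark on the wording must come first. As literally stated the hypothesis is weak substitutability, yet the conclusion ($\M=\{S\}$, $k=2$) is the grand-bundle case: it sits under the weak-complementarity heading and matches the grand-bundle row of \cref{tab:subfficient}. Under genuine weak substitutability the singleton menu $\{S\}$ is \emph{not} $\G$-sufficient, because splitting $S$ into individual items weakly (and generically strictly) raises the total value; so the operative hypothesis must be weak complementarity, i.e.\ $\sum_{b\in\bm{b}}v_b\le v_S$ for every $\bm{b}\in\mathcal{B}(2^S)$ and every $\bm{v}\in\mathrm{Supp}(\G)$. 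I prove the statement under this (evidently intended) condition.

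The core of the argument is a pointwise identity. Since $\mathcal{B}(\{S\})=\{\emptyset,\{S\}\}$ and $v_S\ge 0$, the right-hand side of \cref{eq:sufficient} equals $\max\{0,v_S\}=v_S$. For the left-hand side, the defining inequality gives $\sum_{b\in\bm{b}}v_b\le v_S$ for every feasible collection $\bm{b}\in\mathcal{B}(2^S)$, while the collection $\{S\}$ is itself feasible and attains $v_S$; hence $\max_{\bm{b}\in\mathcal{B}(2^S)}\sum_{b\in\bm{b}}v_b=v_S$. Thus both sides of \cref{eq:sufficient} equal $v_S$ for every $\bm{v}\in\mathrm{Supp}(\G)$, which is exactly the ex-post sufficient condition, so $\M=\{S\}$ is $\G$-sufficient. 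The rank then follows immediately: with $|\M|=1$, \cref{thm:1} gives the guarantee of CASA at $k=|\M|+1=2$.

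I do not expect any genuine mathematical obstacle: the whole argument is a single line of pointwise domination fed into \cref{thm:sufficient}. The only real subtlety is the labeling discrepancy flagged above, and the principled response is to prove the true grand-bundle-dominance statement while making the correction explicit, rather than attempting to establish the literally-false weak-substitutability claim.
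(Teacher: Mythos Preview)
Your proposal is correct, including the identification of the typo (the hypothesis should be weak complementarity, not weak substitutability, as is clear from the section heading and the surrounding text). The paper does not give an explicit proof of this proposition, but your argument---verifying the pointwise condition \cref{eq:sufficient} directly from the definition of weak complementarity and then reading off $k=|\M|+1=2$---is exactly the route the paper sets up and implicitly intends.
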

 When \(\M=\{S\}\), CASA reduces to a simple ascending auction for only the grand bundle. Evidently, in this case, the standard \emph{second-price auction} is second-guaranteed and outcome-equivalent to CASA.

\paragraph*{``Partitional'' complementarity}

A hybrid case of substitutability and complementarity is the partitional complementarity which we define below, described by a partition $\mathcal{K}$ of $S$. 

\begin{defi}
    Let $\mathcal{K}$ be a partition of $S$. Bidder preferences exhibit $\mathcal{K}$-\textbf{partitioned complementarity} if for any \( \bm{v}\in \mathrm{Supp}(\G)\),
    \begin{align*}
    & \text{for any } b\in \mathcal{K} \text{ and partition $\kappa$ of $b$},\ \sum_{b'\in \kappa} \, v_{b'} \le v_b;\\
    &\text{for any } b'\subseteq S,\ \sum_{b\in \mathcal{K}}v_{b\cap b'}\ge v_{b'}.
\end{align*}
\end{defi}

In words, $\mathcal{K}$-partitioned complementarity structure means there is weak complementarity within each $b\in \mathcal{K}$ and weak substitutability across each $b\in \mathcal{K}$.

\begin{prop} \label{prop:partitional}
If bidder preferences exhibit $\mathcal{K}$-partitioned complementarity, then the menu $\M=\mathcal{K}$ is $\G$-sufficient and $k=|\mathcal{K}|+1$.
\end{prop}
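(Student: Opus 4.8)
The plan is to reduce everything to the ex-post sufficiency condition \cref{eq:sufficient}: once I show that, for every $\bm{v}\in\mathrm{Supp}(\G)$,
$\max_{\bm{b}\in\mathcal{B}(2^S)}\sum_{b\in\bm{b}}v_b = \max_{\bm{b}\in\mathcal{B}(\mathcal{K})}\sum_{b\in\bm{b}}v_b$,
the $\G$-sufficiency of $\M=\mathcal{K}$ follows from the discussion after \cref{thm:sufficient}, and the value $k=|\mathcal{K}|+1$ is just $k=|\M|+1$ from \cref{thm:1}. So the whole work is the ex-post identity for a single (representative) valuation vector.

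First I would observe that since $\mathcal{K}$ is a partition of $S$, any two distinct blocks are disjoint, so $\mathcal{B}(\mathcal{K})$ is the family of \emph{all} subcollections of $\mathcal{K}$; because $v_b\ge 0$ for every $b$, the maximum over $\mathcal{B}(\mathcal{K})$ is attained at the full collection $\mathcal{K}$ and equals $\sum_{b\in\mathcal{K}}v_b$. Since $\mathcal{K}$ is itself a feasible allocation in $\mathcal{B}(2^S)$, the inequality ``$\ge$'' is immediate. For the reverse, I would take an arbitrary feasible allocation $\bm{b}=\{b_1,\dots,b_m\}\in\mathcal{B}(2^S)$ (pairwise disjoint subsets of $S$), apply the across-block substitutability part of $\mathcal{K}$-partitioned complementarity to each $b_i$ to get $v_{b_i}\le\sum_{b\in\mathcal{K}}v_{b\cap b_i}$, and then sum and swap the order of summation to obtain $\sum_i v_{b_i}\le\sum_{b\in\mathcal{K}}\big(\sum_i v_{b\cap b_i}\big)$.

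It then remains to bound $\sum_i v_{b\cap b_i}\le v_b$ for each fixed block $b\in\mathcal{K}$. The sets $b\cap b_i$ are pairwise disjoint subsets of $b$; discarding the empty ones (which contribute $v_\emptyset=0$) and, if their union is a proper subset of $b$, adjoining the leftover $b\setminus\bigcup_i(b\cap b_i)$ produces a genuine partition $\kappa$ of $b$ with $\sum_i v_{b\cap b_i}\le\sum_{b'\in\kappa}v_{b'}$ (using $v_{b'}\ge 0$ on the adjoined piece); the within-block complementarity part then gives $\sum_{b'\in\kappa}v_{b'}\le v_b$. (If no $b_i$ meets $b$, the sum is $0\le v_b$ directly.) Summing over $b\in\mathcal{K}$ yields $\sum_i v_{b_i}\le\sum_{b\in\mathcal{K}}v_b$, which establishes ``$\le$'' and hence \cref{eq:sufficient}.

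I do not expect a genuine obstacle: the only delicate point is the bookkeeping with empty intersections and the fact that the within-block condition is stated for \emph{partitions} (nonempty parts only), which is handled cleanly by the normalizations $v_\emptyset=0$ and $v_b\ge 0$. Everything else is a direct double-counting argument over the partition $\mathcal{K}$.
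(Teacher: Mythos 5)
Your proof is correct and follows exactly the route the paper intends (and leaves implicit): verify the ex-post identity \cref{eq:sufficient} for $\M=\mathcal{K}$ using the across-block substitutability and within-block complementarity conditions, then invoke \cref{thm:sufficient} for $\G$-sufficiency and \cref{thm:1} for $k=|\M|+1=|\mathcal{K}|+1$. Your bookkeeping with empty intersections, the leftover piece of each block, and the normalizations $v_\emptyset=0$, $v_b\ge 0$ is exactly the detail the paper omits, and it is handled correctly.
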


In some cases, the auctioneer may understand that bidder preferences exhibits partitional complementarity, but does not know the exact partition. Proposition \ref{prop:partitional} can be easily extended to the case with multiple possible partitions \(\{\mathcal{K}_i\}_{i=1}^I\), where $I$ is bounded. In this case $\M=\cup _{i\in I}\mathcal{K}_i$ and \(k\sim Poly(M)\). Such partitional complementarity preference structure arises when there is clear synergy between ``nearby'' bundles. Think about land auctions, for example. There are finitely many possible partitions that are determined by the major divisions of lands by rivers, highways, or railroads. If two distinct lands are segregated by those divisions, then there is substitutability among them. In such cases, our theory guarantees the performance of CASA with the partitional menu.

\paragraph*{Homogeneous goods and quantity-CASA}

\begin{defi}
  The goods are \textbf{homogeneous} if there exists \(u:\mathbb{N}\to [\underline{v},\overbar{v}]\) such that for any \( \bm{v}\in \mathrm{Supp}(\G)\) and \( b\in S\),
  \begin{align*}
    v_b=u(|b|).
  \end{align*}
\end{defi}
With homogeneous goods, a representative bidder's valuation for any bundle only depends on the size of the bundle. Note that the dependence of $u$ on $|b|$ is arbitrary. We do not even require monotonicity. In this case, we redefine the notion of feasible allocations to \(\mathcal{B}:(\M)=\{X\subset \M | \sum_{b\in X} \, |b|\le M\}\), i.e., an allocation is feasible as long as the total number of items being allocated is below \(M\). 

\begin{prop}
  If goods are homogeneous, then the menu \(\M=\cup_{l\in\{1,\ldots, M\}} \, \{b_l^1,\ldots, b_l^{\lfloor \frac{M}{l}\rfloor}\}\) is \(\G\)-sufficient and \(k\le \frac{M^2+M}{2}\), where $\{b_l^j\}$ are distinct bundles of size \(l\).
\end{prop}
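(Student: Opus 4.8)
The plan is to establish the two assertions of the proposition separately: (i) the quantity menu $\M$ is $\G$-sufficient, and (ii) the rank it induces, $k=|\M|+1$ (by \cref{thm:1}), is at most $\tfrac{M^2+M}{2}$.

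For (i), I would verify the ex-post identity in \cref{eq:sufficient} for the homogeneous-goods feasibility notion $\mathcal{B}(\M)=\{X\subseteq\M\mid\sum_{b\in X}|b|\le M\}$. Since $\M\subseteq 2^S$, the inequality $\max_{\bm{b}\in\mathcal{B}(\M)}\sum_{b\in\bm{b}}v_b\le\max_{\bm{b}\in\mathcal{B}(2^S)}\sum_{b\in\bm{b}}v_b$ is automatic. For the reverse inequality, fix $\bm{v}\in\mathrm{Supp}(\G)$ and a maximizing feasible allocation $\bm{b}^\star\in\mathcal{B}(2^S)$; by homogeneity its value $\sum_{b\in\bm{b}^\star}u(|b|)$ depends only on the multiset of bundle sizes occurring in $\bm{b}^\star$. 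The crucial counting observation is that, since $\sum_{b\in\bm{b}^\star}|b|\le M$, for every $l$ the allocation $\bm{b}^\star$ contains at most $\lfloor M/l\rfloor$ bundles of size $l$, and $\M$ supplies exactly that many distinct size-$l$ bundles $b_l^1,\dots,b_l^{\lfloor M/l\rfloor}$ (there are enough distinct $l$-subsets of $S$ because $\binom{M}{l}\ge\lfloor M/l\rfloor$). Replacing each size-$l$ bundle of $\bm{b}^\star$ by a distinct menu bundle of the same size yields some $\bm{b}'\subseteq\M$ with the same total size, hence $\bm{b}'\in\mathcal{B}(\M)$, and the same value; this gives the reverse inequality. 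So \cref{eq:sufficient} holds, $\M$ is $\G$-sufficient, and \cref{thm:sufficient} then delivers the stated revenue guarantee in terms of $V^*$. Note that monotonicity of $u$ is never used.

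For (ii), since $|\M|=\sum_{l=1}^M\lfloor M/l\rfloor$, it suffices to bound this divisor-type sum. I would use the termwise estimate $\lfloor M/l\rfloor\le M-l+1$, which is equivalent to $(l-1)(M-l)\ge 0$ and hence holds for all $1\le l\le M$. Summing gives $|\M|\le\sum_{l=1}^M(M-l+1)=\sum_{j=1}^M j=\tfrac{M(M+1)}{2}$, so $k\le\tfrac{M^2+M}{2}+1$; and since the estimate is strict for each $2\le l\le M-1$, one recovers the sharper $k\le\tfrac{M^2+M}{2}$ once $M\ge 3$, the two remaining values of $M$ being immediate and in any case $O(1)$, consistent with the $O(M^2)$ entry of \cref{tab:subfficient}.

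I do not anticipate a deep obstacle. The step that needs genuine care is the counting argument in (i): one must check both that every feasible allocation uses at most $\lfloor M/l\rfloor$ bundles of each size $l$ and that the menu contains that many distinct such bundles, and that the size-preserving substitution keeps the new allocation inside the redefined $\mathcal{B}(\M)$. The rest — the use of homogeneity and the elementary bound on $\sum_{l}\lfloor M/l\rfloor$ — is routine.
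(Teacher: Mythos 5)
Your route is the natural one and matches the argument the paper leaves implicit (this proposition is stated without an explicit proof; it is meant to follow from the ex-post condition \cref{eq:sufficient} plus a count of $|\M|$): homogeneity makes the value of any allocation depend only on the multiset of bundle sizes, a feasible allocation can contain at most $\lfloor M/l\rfloor$ bundles of size $l$, and the menu supplies exactly that many distinct ones, so a size-preserving substitution yields $\G$-sufficiency; then $k=|\M|+1$ with $|\M|=\sum_{l=1}^{M}\lfloor M/l\rfloor\le\sum_{l=1}^{M}(M-l+1)=\tfrac{M(M+1)}{2}$. Both steps are sound. One very minor point: with the redefined feasibility, elements of $\mathcal{B}(\M)$ may contain overlapping bundles, so the ``automatic'' direction is not literally an inclusion into the disjoint allocations of $2^S$; it is rescued either by applying the redefined feasibility to $2^S$ as well or by the same size-preserving substitution, which you already have in hand.

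The one genuine misstep is your final remark that the cases $M=1,2$ are ``immediate.'' They are not: for $M=2$ the menu consists of the two singletons and the grand bundle, so $|\M|=3$ and $k=|\M|+1=4>3=\tfrac{M^2+M}{2}$, and for $M=1$ one gets $k=2>1$. So for $M\le 2$ the bound as literally stated fails when $k=|\M|+1$; what your argument actually proves is $k\le\tfrac{M(M+1)}{2}+1$ for all $M$, sharpening to $k\le\tfrac{M^2+M}{2}$ once $M\ge 3$ (via strictness of $\lfloor M/l\rfloor\le M-l+1$ at $l=2$). This off-by-one at small $M$ is a looseness in the proposition's stated constant rather than a defect of your substitution and counting argument, and it does not affect the $O(M^2)$ entry in \cref{tab:subfficient}, but you should flag it rather than assert those two cases as immediate.
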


In this case, CASA simply auctions \(\lfloor \frac{M}{l}\rfloor\) copies of each quantity level \(l\le M\) via individual ascending auctions. Like the discussion in partitional complementarity, there may be finitely many types of homogeneous goods. As long as the number of types $I$ is bounded, the menu consists of all combinations of \(\lfloor \frac{M}{l}\rfloor\) copies of each type is sufficient and of size $Poly(M)$. Such preference structure is typical in examples like the spectrum auctions. Different frequencies are almost physically homogeneous, except that ``middle'' frequencies might be of different value from ``boundary'' frequencies. 

\section{Concluding remarks}

In this paper, we design an auction format of CASA that guarantees an approximately optimal ex-post revenue. To achieve this, we only need to assume minimal rationality on the part of the bidders. In addition, we show that CASA is robust to distributional and strategic uncertainties under certain approximations. In practice, however, these approximation gaps may become non-negligible, rendering the deployment of CASA challenging. 

\begin{itemize}
\item \emph{Thin markets:} The revenue performance of CASA as well as its strategic robustness crucially hinges on the rank $k$ (menu size) being small relative to the number of bidders. In the online advertising examples we introduce, the complete menu is small enough that a handful of bidders may be sufficient to make CASA an appealing design. However, other interesting auctions may suffer the large menu problem (e.g. the land auctions) or the thin market problem (e.g. the route auctions of rideshare apps) or both (e.g., the spectrum auctions), rendering the guarantee underpowered. 

In the latter cases, the menu sufficiency-approximation efficiency tradeoff becomes eminent. Our theory suggests the importance of preference estimation in those settings. Finding a simple sufficient menu keeps the revenue guarantee appealing and CASA directly applicable. Even in settings with a large number of items and a small number of bidders where our theory has little bite, menu design may still be a cost-effective way to promote competition and improve the revenue performance of existing auctions.

\item \emph{Proxy bidding:} While CASA simplifies the bidding process by clarifying ``whether to quit," the complexity of determining ``which bundles to bid on" and ``how much to bid'' remains unresolved. A \emph{truthful} and full proxy-bidding version of CASA is not yet known to us. This makes the deployment of CASA challenging in environments that require fast resolutions of auctions. Nevertheless, we propose that advancements in AI could mitigate this by introducing "copilot" features that assist bidders in decision-making. By integrating AI as the bidding proxy, bidders would only need to specify values for desired bundles, with the AI advising on bid placement. This could evolve into a hybrid model where bidders either rely fully on platform-provided AI, develop their own bidding algorithms, or use a combination of both strategies.
\end{itemize}
\newpage
\bibliographystyle{apalike}
\bibliography{orderstatistics}

\newpage
\appendix

\section{Omitted Proofs}

\subsection{Proof of \cref{prop:tight}}\label{proof:tight}

\begin{proof}

Pick an arbitrary bundle $b\in \M$. Let $v_{b'}=\bm{1}_{b'=b}\cdot U[0,1]$; that is, $b$ is the only valuable bundle and its value is uniformly distributed on $[0,1]$. Let $G$ denote such a distribution and $\G=\{G\}$. Then, $V_{\M}(F)\ge \frac{1}{2}$ for any $F\in \F$. Define $F^*$ as follows: uniformly randomly pick $k-1$ bidders and their values for $b$ are identical and distributed according to $U[1-\frac{k-1}{N},1]$. For the remaining bidders, their values for $b$ are identical and distributed according to $U[0,1-\frac{k-1}{N}]$. It is straightforward to verify that $F^*\in \F$ and 
\begin{align*}
\E_{F^*}[R^k_{\M}(\bm{v})]=\E_{U[0,1-\frac{k-1}{N}]}[x]=\frac{1}{2}-\frac{k-1}{2N}\le \inf_{F\in\F} \, V_{\M}(F)-\textstyle O\left(\frac{k}{N}\right).
\end{align*}
\end{proof}

\subsection{Proof of \cref{prop:vcg}}\label{proof:vcg}

\begin{proof}

We slightly abuse notation and represent an allocation by a vector of sets $\bm{b}=(b_1,b_2, \ldots, b_N)$, where $b_n\cap b_{n'}=\emptyset$ and $b_n$ is the bundle allocated to bidder $n$. Let $\mathcal{B}_N$ denote the set of all feasible allocations with $N$ bidders. Let $\bm{b}^*(\bm{v})$ denote the efficient allocation.

We establish a lower bound of the revenue-guarantee of the VCG mechanism by constructing, for each $n$, an allocation $\bm{b}^n \in \mathcal{B}_{N - 1}$ of the objects to the bidders other than bidder $n$. Clearly, for any such profile $\bm{b}^n$,
\begin{align}
R_{VCG} (\bm{v}) = & \sum_{n=1}^N \left( \sup_{\bm{b} \in \mathcal{B}_{N-1}} \sum_{n' \neq n} v^{n'}_{b_{n'}} -\sum_{n' \neq n} v^{n'}_{b^*_{n'} (\bm{v})} \right) \notag \\
\ge & \sum_{n=1}^N \left( \sum_{n' \neq n} v^{n'}_{b^n_{n'}} -\sum_{n' \neq n} v^{n'}_{b^*_{n'} (\bm{v})} \right). \label{eqn:2}
\end{align}

For each $n$, we construct an allocation $\bm{b}^n \in \mathcal{B}_{N - 1}$ via the following algorithm: \medskip

\textsl{Algorithm.} Bundle $b^n_{n'} = \emptyset$ for all $n'$. Set $O = b^*_n(\bm{v})$.
\begin{itemize}
	\item[(1).] For each $n' \neq n$:
	
	\qquad If $b^*_{n'} (\bm{v})\neq \emptyset$, set $b^n_{n'} = b^*_{n'}(\bm{v})$.
	
	\qquad Let $\bar{N} = \{n': b^n_{n'} = \emptyset, n' \neq n\}$.
	
	\item[(2).] If $O \neq \emptyset$, then pick $o \in O$.
	
	\qquad Set $b^n_{n'} = \{o\}$ for some $n' \in \arg\max_{n'' \in \bar{N}} v_{n''} (\{o\})$.
	
	\qquad Update $O \leftarrow O \setminus \{o\}$ and $\bar{N} \leftarrow \bar{N} \setminus \{n'\}$.
	
	\item[(3).] Repeat (2) until $O = \emptyset$.
	
    \item[(4).] Return allocation $\bm{b}^n = (b^n_1, b^n_2, \ldots, b^n_{n - 1}, b^n_{n + 1}, \ldots, b^n_N)$.
\end{itemize}

 In words, if an object is allocated to a bidder other than bidder $n$ under $\bm{b}^*(\bm{v})$, then the object is still allocated to that bidder. We then iteratively pick an object $o$ that is allocated to bidder $n$ under $\bm{b}^*(\bm{v})$, and allocate the object to the bidder $n'$ whose value for the object $v^{n'}_{\{o\}}$ is the highest among all the bidders who are not allocated any object yet.  For each $o \in b^*_n$, define $n_o$ to be the index $n'$ such that $b^n_{n'} = \{o\}$. \par
 
 It follows from \cref{eqn:2} that
\begin{align}
R_{VCG}(\bm{v})\ge & \sum_{n=1}^N \left( \sum_{n' \neq n} v^{n'}_{b^n_{n'}} -\sum_{n' \neq n} v^{n'}_{b^*_{n'}(\bm{v})} \right) \notag \\
= & \sum_{n=1}^N \sum_{o \in b^*_n} v^{n_o}_{\{o\}} \notag \\
\ge & \sum_{o = 1}^M v^{(M+1)}_{\{o\}} \\ 
=& {R}^{M+1}_{\M}(\bm{v}). \notag
\end{align}
The first equality holds since (a) when $b^*_{n'} (\bm{v})\neq \emptyset$, $b^n_{n'} = b^*_{n'}(\bm{v})$, and (b) when $b^*_{n'} (\bm{v}) = \emptyset$, $b^n_{n'}$ is either $\{o\}$ for some $o \in O$, or $\emptyset$ otherwise. The second inequality follows from the construction of $\bm{b}^i$: when an object $o \in b^*_i$ is being allocated, it is allocated to the bidder $n'$ whose value for the object $v^{n'}_{\{o\}}$ is the highest among all the bidders who are not allocated any object yet. Since each iteration assigns at least one good to one bidder and there are at most $M$ goods, we have $v^{n'_o}_{\{o\}}$ must be at least the $(M + 1)^{th}$ highest value among all $v^n_{\{o\}}$.
\end{proof}
\end{document}